\newtheorem{theorem}{Theorem}
\newtheorem{lemma}[theorem]{Lemma}
\newcommand{\finalstate}[1][]{%
	\renewcommand{\added}[2][]{##2}%
	\renewcommand{\replaced}[3][]{##2}}%
\tikzset{
	0 my edge/.style={densely dashed, my edge},
	my edge/.style={-{Stealth[]}},
}
\journal{Reliability Engineering \& System Safety}
\newcommand{\kterminal}{$\mathcal{K}$-terminal reliability}
\newcommand{\twoterminal}{two-terminal reliability}
\newcommand{\allterminal}{all-terminal reliability}
\newcommand{\instancef}{(G,P)}
\newcommand{\instance}{$\instancef$}
\newcommand{\realf}{G(P)}
\newcommand{\real}{$\realf$}
\newcommand{\unrelf}{u_G(P)}
\newcommand{\unrel}{$\unrelf$}
\newcommand{\unrelef}{\hat{u}_G(P)}
\newcommand{\unrele}{$\unrelef$}
\newcommand{\relf}{r_G(P)}
\newcommand{\rel}{$\relf$}
\newcommand{\hll}{HLL}
\newcommand{\aaaf}{\mathcal{AA}}
\newcommand{\aaa}{$\aaaf$}
\newcommand{\sraf}{\textit{SRA}}
\newcommand{\sra}{$\sraf$}
\newcommand{\gbasf}{\textit{GBAS}}
\newcommand{\gbas}{$\gbasf$}
\newcommand{\relnet}{$\mathcal{K}$-RelNet}
\newcommand{\epsof}{\epsilon_{\text{o}}}
\newcommand{\deltaof}{\delta_{\text{o}}}
\newcommand{\epso}{$\epsof$}
\newcommand{\deltao}{$\deltaof$}
\newcommand{\erf}{\text{ER}}
\newcommand{\Expect}{E}
\begin{document}

\begin{frontmatter}



\title{Principled Network Reliability Approximation: A Counting-Based Approach\footnote{Accepted. DOI: \href{https://doi.org/10.1016/j.ress.2019.04.025}{10.1016/j.ress.2019.04.025}. {\textcopyright}2019.  Licensed under the Creative Commons \href{https://creativecommons.org/licenses/by-nc-nd/2.0/}{BY-NC-ND}.}}

\cortext[cor1]{E-mail: roger.paredes@rice.edu}
\author[label1]{R. ~Paredes\corref{cor1}}
\author[label1]{L. Due\~nas-Osorio}
\author[label2,label3]{K.S. Meel}
\author[label3]{M.Y. Vardi}
\address[label1]{Department of Civil and Environmental Engineering, Rice University, Houston, USA.}
\address[label2]{Computer Science Department, National University of Singapore, Singapore.}
\address[label3]{Department of Computer Science, Rice University, Houston, USA.}

\begin{abstract}
	As engineered systems expand, become more interdependent, and operate in real-time, reliability assessment is key to inform investment and decision making. However, network reliability problems are known to be \#P-complete, a computational complexity class believed to be intractable, and thus motivate the quest for approximations. Based on their theoretical foundations, reliability evaluation methods can be grouped as: (\romannumeral 1) \textit{exact or bounds}, (\romannumeral 2)\textit{ guarantee-less sampling}, and (\romannumeral 3) \textit{probably approximately correct} (PAC). Group (\romannumeral 1) is well regarded due to its useful byproducts, but it does not scale in practice. Group (\romannumeral 2) scales well and verifies desirable properties, such as the \textit{bounded relative error}, but it lacks error guarantees. Group (\romannumeral 3) is of great interest when precision and scalability are required. We introduce {\relnet}, an extended counting-based method that delivers PAC guarantees for the {\kterminal} problem. We also put our developments in context relative to classical and emerging techniques to facilitate dissemination. Then, we test in a fair way the performance of competitive methods using various benchmark systems. We note the range of application of algorithms and suggest a foundation for future computational reliability and resilience engineering, given the need for principled uncertainty quantification in complex systems.
\end{abstract}

\begin{keyword}
network reliability \sep FPRAS \sep PAC \sep relative variance \sep uncertainty \sep model counting \sep satisfiability


\end{keyword}

\end{frontmatter}


\section{Introduction}

\noindent Modern societies rely on physical and technological networks such as transportation, power, water, and telecommunication systems. Quantifying their reliability is imperative in design, operation, and resilience enhancement. Typically, networks are modeled using a graph where vertices and edges represent unreliable components. Network reliability problems ask: what is the probability that a complex system with unreliable components will work as intended under prescribed functionality conditions?

In this paper, we focus on the {\kterminal} problem~\cite{Ball1986}. In particular, we consider an undirected graph $G=(V,E,\mathcal{K})$, where $V$ is the set of vertices, $E\subseteq V\times V$ is the set of edges, and $\mathcal{K}\subseteq V$ is the set of terminals. We let $G(P)$ be a stochastic graph, where every edge $e\in E$ vanishes from $G$ with respective probabilities $P=(p_e)_{e\in E}$. We assume a binary-system, and say {\real} is \textit{unsafe} if a subset of vertices in $\mathcal{K}$ becomes disconnected, and \textit{safe} otherwise. Thus, given instance {\instance} of the {\kterminal} problem, we are interested in computing the unreliability of $G(P)$, denoted {\unrel}, and defined as the probability that $G(P)$ is unsafe.

If $|\Theta|$ is the cardinality of set $\Theta$, then $n=|V|$ and $m=|E|$ are the number of vertices and edges, respectively. Also, when $|\mathcal{K}|=n$ and $|\mathcal{K}|=2$, the {\kterminal} problem reduces to the all-terminal and two-terminal reliability problems, respectively. These are well-known and proven to be~\#P-complete problems~\cite{Ball1986,Valiant1979}. The more general {\kterminal} problem is \#P-hard, so ongoing efforts to compute {\unrel} focus on practical bounds and approximations. 

Exact and bounding methods are limited to networks of small size, or with bounded graph properties such as treewidth and diameter~\cite{Hardy2007, Canale2016}. Thus, for large  $G$ of general structure, researchers and practitioners lean on simulation-based estimates with acceptable Monte Carlo error~\cite{Fishman1986}. However, in the absence of an error prescription, simulation applications can use unjustified sample sizes and lack a priori rigor on the quality of the estimates, thus becoming \textit{guarantee-less} methods.

A formal approach to guarantee quality in Monte Carlo applications relies on the so-called $(\epsilon,\delta)$  \textit{approximations}, where $\epsilon$ and $\delta$ are user specified parameters regarding the relative error and confidence, respectively. As an illustration, for $Y$ as a random variable (RV), say we are interested in computing its expected value $\Expect[Y]=\mu_Y$. Then, after \textit{we specify} parameters $\epsilon,\delta\in(0,1)$, an $(\epsilon,\delta)$ approximation returns estimate $\overline{\mu}_Y$ such that $\Pr(|\overline{\mu}_Y/\mu_Y-1|\geq \epsilon)\leq \delta$. In other words, an $(\epsilon,\delta)$  \textit{approximation} returns an estimate with relative error below $\epsilon$ with at least confidence $1-\delta$.
We term \textit{Probably Approximately Correct} (PAC) the family of methods whose algorithmic procedures deliver estimates with $(\epsilon,\delta)$ guarantees.\footnote{We borrow the PAC terminology from the field of artificial intelligence~\cite{Valiant1984}.}

Having a formal notion of error, we can rigorously address a key issue in Monte Carlo applications: the \textit{sample size}, herein denoted $N$. Using standard probability arguments, and positive finite $\mu_Y$ as the only assumption,\footnote{In this paper $\mu_Y$ will be a probability, such as network unreliability {\unrel}.} we derive: $N=O(\sigma^2_Y/\mu^2_Y\epsilon^{-2}\log1/\delta)$ (See appendix, Theorem~\ref{theom:markov}), exposing Monte Carlo's weakness when required to guarantee results. To make it self-evident, let us model our binary-system as $Y$, a Bernoulli RV, such that $\mu_Y=\unrelf$. Then, note that the substitution of $\mu_Y$ and $\sigma^2_Y$ in $N$ leads to $N\propto 1/\unrelf$, which can be prohibitively large as engineered systems are \textit{highly-reliable} by design.

The sample size issue is a well researched subject of \textit{rare-event} simulation, and we refer readers to Chapter 2~\cite{Fishman1996} and Chapter 1~\cite{Gertsbakh2016} for more background. Attempts to make simulation more affordable include: the Multilevel Splitting method~\cite{Botev2012,Zuev2015}, the recursion-based Importance Sampling method~\cite{Cancela1995}, the Permutation Monte Carlo-based method~\cite{Gertsbakh2010} and its Splitting Sequential Monte Carlo extension~\cite{Vaisman2016}, among others. Some of these techniques verify desired properties, such as the Bounded Relative Variance (BRV) or $\sigma_Y^2/\mu_Y^2=O(1)$, and the Vanishing Relative Variance (VRV) or $\sigma_Y^2/\mu_Y^2=o(1)$, where $Y$ denotes the Monte Carlo estimate returned by a sampling technique.\footnote{Where the little-o notation $f(n)=o(g(n))$ stands for $f/g \rightarrow 0$, for $n>n_0$, $n,n_0\in \mathbb{N}$.} Despite being effective in the rare-event setting, these methods often appeal to the central limit theorem and do not assure quality of error or performance, thus remaining guarantee-less to users.

Naturally, a method that overcomes the rare-event issue while delivering rigorous error guarantees would be of great use in reliability applications. In other words, system reliability is calling for \textit{efficient} PAC methods for a rigorous treatment of uncertainties. Theoretically speaking, an efficient method runs in polynomial time as a function of the size of $(G,P)$, $1/\epsilon$, and $\log(1/\delta)$. In the computer science literature, such a routine is called a \textit{fully polynomial randomized approximation scheme} (FPRAS) for network unreliability. Clearly, efficient in theory does not imply efficient in practice, e.g., the order of the polynomial function bounding the worst-time complexity can be arbitrarily large. Thus, it is imperative to complement theoretically sound developments with computer evaluations. To the best of our knowledge, there is no known FPRAS for the {\kterminal} problem. However, there is a precedent, where Karger gave the first FPRAS for the all-terminal reliability case~\cite{Karger2001}.

To tackle computational and precision issues, this paper develops {\relnet}, a counting-based PAC method for network unreliability that inherits properties of state-of-the-art approximate model counters in the field of computational logic~\cite{Meel2017}. Our approach delivers rigorous $(\epsilon,\delta)$ guarantees and is \textit{efficient} when given access to an NP-oracle: a black-box that solves nondeterministic polynomial time decision problems. The use of NP-oracles for randomized
approximations, first proposed by Stockmeyer~\cite{Stockmeyer1983}, is increasingly within reach as in practice we can leverage efficient solvers for Boolean satisfiability (SAT) that are under active development. Given the variety of methods to compute {\unrel}, we showcase our developments against alternative approaches. In the process, we highlight methodological connections missed in the engineering reliability literature, key theoretical properties of our method, and unveil practical performance through fair computational experiments by using existing and our own benchmarks.

The rest of the manuscript is structured as follows. Section~2 gives background on network reliability evaluation and its $(\epsilon,\delta)$ approximation, as well as the necessary background on Boolean logic before introducing our new counting-based approach: {\relnet}, an efficient PAC method for the {\kterminal} problem. Section~3 contextualizes our contribution relative to other techniques for network reliability evaluation. We highlight key properties for users and draw important connections in the literature. Section~4 presents the main results of our computational evaluation. Section~5 rounds up this study with conclusions and promising research directions.

\section{Counting-Based Network Reliability Evaluation}
\noindent We begin this section with relevant mathematical background and notation, then we introduce the new method, termed {\relnet}. We do so through a fully worked out example for counting-based reliability estimation.
\subsection{Principled network reliability approximation}
\noindent Given instance {\instance} of the {\kterminal} problem, we represent a realization of the stochastic graph {\real} as an $m$-bit vector $X=(x_e)_{e\in E}$, with $m = |E|$, such that $x_{e}=0$ if edge $e\in E$ is failed, and $x_e=1$ otherwise. Note that $\Pr(x_{e}=0)=p_{e}$, and that the set of possible realizations is $\Omega=\{0,1\}^m$. Furthermore, let $\Phi:\Omega\mapsto\{0,1\}$ be a function such that $\Phi(X)=0$ if some subset of $\mathcal{K}$ becomes disconnected, i.e. $X$ is \textit{unsafe}, and $\Phi(X)=1$ otherwise. Also, we define the \textit{failure} and \textit{safe} domains as $\Omega_f=\{X\in\Omega: \Phi(X)=0 \}$ and $\Omega_s=\{X\in\Omega: \Phi(X)=1 \}$, respectively. In practice, we can evaluate $\Phi$ efficiently using breadth-first-search. 

Network reliability, denoted as {\rel}, can be computed as follows:
\begin{equation} \label{eq:brutef}
\relf = 1 - \unrelf =\sum_{X\in \Omega}  \Phi(X)\cdot\Pr(X) ,
\end{equation}
\begin{equation}\label{eq:probx}
	\Pr(X) = \prod_{e_i\in E}^{}p_{e_i}^{(1-x_{i})}\cdot(1-p_{e_i})^{x_{i}},
\end{equation}
where Eq.~(\ref{eq:probx}) assumes independent edge failures. Clearly, the number of terms $|\Omega|=2^m$ of Eq.~(\ref{eq:brutef}) grows exponentially, rendering the brute-force approach useless in practice, and motivating the development of network reliability evaluation methods that can be grouped into: exact or bounds, guarantee-less simulation, and probably approximately correct (PAC).

When exact methods fail to scale in reliability calculations, simulation is the preferred alternative. However, mainstream applications of simulation lack performance guarantees on error and computational cost. Typically, users embark on a trial and error process for choosing the sample size, trying to meet, if at all possible, a target empirical measure of variance such as the coefficient of variation. However, similar approaches have been shown to be unreliable~\cite{Bayer2014}, jeopardizing reliability applications at a time when uncertainty quantification is key, as systems are increasingly complex~\cite{Ellingwood2016}.

To secure a rigorous application of the Monte Carlo method, we use $(\epsilon,\delta)$ approximation methods, which use no assumptions such as the central limit theorem, and that give guarantees of approximation in the non-asymptotic regime, i.e., they deliver \textit{provably} sound approximations with a finite number of samples. Formally, for input parameters $\epsilon,\delta\in (0,1)$, we define a PAC method for network unreliability evaluation as one that outputs estimate {\unrele} such that:
\begin{equation}\label{eq:pac}
\Pr\bigg(\frac{|\unrelef-\unrelf|}{\unrelf}\geq \epsilon\bigg) \leq \delta.
\end{equation}

Recently, the authors introduced RelNet~\cite{Duenas-Osorio2017}, a counting based framework for approximating the {\twoterminal} problem that issues $(\epsilon,\delta)$ guarantees. In this paper, we introduce {\relnet}, an extension that, to the best of our knowledge, is the first \textit{efficient} PAC method for the general {\kterminal} problem.

Next, we survey important background in Boolean logic definitions before introducing {\relnet}.
\subsection{Boolean logic}
\noindent A Boolean formula $\psi:X \in\{0,1\}^n \to \{0,1\}$ is in conjunctive normal form (CNF) when written as $\psi(X)=C_1\wedge \cdots \wedge C_m$, with each clause $C_i$ a disjunction of literals, e.g., $C_1=x_1 \vee \neg x_2 \vee x_3$. We are interested in solving the $\#$SAT (``Sharp SAT'') problem, which counts the number of variable assignments satisfying a CNF formula. Formally, $\#\psi = \big|\big\{X\in \{0,1\}^n | \psi(X)=1\big\}\big|$.
For example, consider the expression $x_1 \neq x_2$. Its CNF representation is $\psi(X) = (x_1 \vee x_2) \wedge (\neg x_1 \vee \neg x_2)$, and the number of satisfying assignments of $\psi$ is $\#\psi=2$.

Furthermore, for Boolean vectors of variables $X=(x_1,\dots,x_n)$ and $S=(s_1,\dots,s_p)$, define a $\Sigma_1^1$ formula as one that is expressed in the form $F(X,S)=\exists S[ \psi(X,S)]$, with $\psi$ a CNF formula over variables $X$ and $S$. Similarly, we are interested in its associated counting problem, called projected counting or ``$\#\exists$SAT.'' Formally, $\#F = \big|\big\{X\in \{0,1\}^n |\exists S\text{ such that }\psi(X,S)=1\big\}\big|$. We use $\Sigma_1^1$ formulas because they let us introduce needed auxiliary variables ($S$) for global-level Boolean constraints, such as reliability, but count strictly over the problem variables ($X$). As an example, consider the expression $[(x_1 \text{ OR } s_1) \neq x_2]$. Its CNF representation is $\psi(X, S) = (x_1 \vee s_1 \vee x_2) \wedge (\neg x_1 \vee \neg s_1 \vee \neg x_2)$, and note the difference between the associated counts $\#\psi=6$ and $\#F=4$. The latter is smaller because the quantifier $\exists$ over variables $S$ ``projects'' the count over variables $X$. To better grasp this projection, observe that $F(X,S)=\exists S [\psi(X,S)]$ is \textit{equivalent} to $\bigvee_{S\in\{0,1\}^p}[\psi(X, S)]$, which in our example simplifies to $(\neg x_1 \vee \neg x_2) \vee  (x_1 \vee x_2)=1$, i.e., for every assignment of variables $X\in\{0,1\}^2$, there is $S\in\{0,1\}$ such that $F(X,S)=1$, and thus $\#F=4$. The equivalent form is shown only for illustration purposes, as it is intractable to work with due to its  length growing exponentially in the number of variables in $S$. Instead, we feed $F(X,S)=\exists S [\psi(X,S)]$ to a state-of-the-art approximate model counter~\cite{SM19}.

Next, we introduce $F_{\mathcal{K}}$, a $\Sigma_1^1$ formula encoding the unsafe property of a graph $G$, and show that $\#F_{\mathcal{K}}=|\Omega_f|$. Recall $\Omega_f$ is the network failure domain $\Omega_f=\{X\in\Omega: \Phi(X)=0 \}$. Moreover, using a polynomial-time reduction to address arbitrary edge failure probabilities, we solve the {\kterminal} problem by computing $\#F_{\mathcal{K}}$. The problem of counting the number of satisfying assignments of a Boolean formula is hard in general, but it can be approximated efficiently via state-of-the-art PAC counters with access to an NP-oracle. In practice, an NP-oracle is a SAT solver capable of handling formulas with up to million a variables, which is orders of magnitude larger than typical network reliability instances.

\begin{figure}
	\centering
	\subfloat[Weighted instance.\label{subf:weight}]{
		\begin{tikzpicture}[darkstyle/.style={circle,draw,fill=gray!65,minimum size=0.4cm}]
		\node [darkstyle] (1) at (0,0) {\scriptsize a};
		\node [circle, draw] (2) at (2,2) {\scriptsize b};
		\node [circle, draw] (3) at (2,-2) {\scriptsize c};
		\node [darkstyle] (4) at (4,0) {\scriptsize d};
		\draw (1)--(2) node[midway, above] {$p_{e_1}=1/2$};
		\draw (1)--(3) node[midway, above] {$p_{e_2}=3/8$};
		\draw (2)--(4) node[midway, above] {$p_{e_3}=1/2$};
		\draw (3)--(4) node[midway, above] {$p_{e_4}=1/2$};
		\node [] (aux) at (2,0) {$(G,P)$};
		\end{tikzpicture}
	}
	\subfloat[Unweighted instance.\label{subf:unweight}]{
		\begin{tikzpicture}[darkstyle/.style={circle,draw,fill=gray!65,minimum size=0.4cm}]
		\pgfmathsetmacro{\dx}{0}
		\node [darkstyle] (1) at (0+\dx,0) {\scriptsize a};
		\node [circle, draw] (2) at (2+\dx,2) {\scriptsize b};
		\node [circle, draw] (13) at (1+\dx-.5,-1-.5) {\scriptsize $v_1$};
		\node [circle, draw] (3) at (2+\dx,-2) {\scriptsize c};
		\node [darkstyle] (4) at (4+\dx,0) {\scriptsize d};
		\draw (1)--(2) node[midway, above] {$e_1$};
		\draw (1)--(3) node[midway, above] {$e_2$};
		\draw (2)--(4) node[midway, above] {$e_3$};
		\draw (3)--(4) node[midway, above] {$e_4$};
		\draw (1)--(13) node[midway, left] {$e_5$};
		\draw (13)--(3) node[midway, below] {$e_6$};
		\node [] (aux) at (2+\dx,0) {$(G',P_{1/2})$};
		\end{tikzpicture}}\\
	\subfloat[Terms of $\Sigma_1^1$ formula $F_{\mathcal{K}}$ and exact counting calculations.\label{subf:fk}]{
		\begin{tikzpicture}[darkstyle/.style={circle,draw,fill=gray!65,minimum size=0.4cm}]
		\pgfmathsetmacro{\dx}{0}
		\node[] (tk) at (5,-4) {\small \begin{tabular}{l}
			\small $C_{e_1}=(s_a \wedge x_{e_1} \rightarrow s_b) \wedge (s_b \wedge x_{e_1} \rightarrow s_a)$,
			\small $C_{e_2}=(s_a \wedge x_{e_2} \rightarrow s_c) \wedge (s_c \wedge x_{e_2} \rightarrow s_a)$.\\
			\small $C_{e_3}=(s_b \wedge x_{e_3} \rightarrow s_d) \wedge (s_d \wedge x_{e_3} \rightarrow s_b)$,
			\small $C_{e_4}=(s_c \wedge x_{e_4} \rightarrow s_d) \wedge (s_d \wedge x_{e_4} \rightarrow s_c)$.\\
			\small $C_{e_5}=(s_a \wedge x_{e_5} \rightarrow s_{v_1}) \wedge (s_{v_1} \wedge x_{e_5} \rightarrow s_a)$,
			\small $C_{e_6}=(s_{v_1} \wedge x_{e_6} \rightarrow s_c) \wedge (s_c \wedge x_{e_6} \rightarrow s_{v_1})$.\\
			\multicolumn{1}{l}{\small $S=\{s_a, s_b, s_c, s_d, s_v\}$, $F_{\mathcal{K}}=\exists S \big((s_a\vee s_d) \wedge (\neg s_a\vee \neg s_d) \wedge \bigwedge_{i=1}^{6} C_{e_i} \big)$.}\\
			\multicolumn{1}{l}{    \small $\#F_{\mathcal{K}}=33$, $\unrelf=\#F_{\mathcal{K}}/2^{|E'|}=33/64$.}
			\end{tabular}
		};
		\end{tikzpicture}
	}
	\caption{{\relnet} example with $\mathcal{K}$$=$$\{\text{a},\text{b}\}$. (a) Original instance, (b) its reduction to $p_e=1/2$, $\forall e \in E'$, and (c) exact counting $\#F_{\mathcal{K}}$.}
	\label{fig:relnet}
\end{figure}

\subsection{Reducing network reliability to counting}
\noindent Next we introduce the {\relnet} formulation. Given propositional variables $S=(s_u)_{u\in V}$ and propositional variables $X=(x_{e})_{e\in E}$, then define:
\begin{equation}\label{eq-tran}
C_{e} = \big[(s_u \land x_{e}) \rightarrow s_v\big] \wedge \big[(s_v \land x_{e}) \rightarrow s_u\big] , \forall e\in E,
\end{equation}
\begin{equation}\label{eq-relnet}
F_\mathcal{K} =\exists S[\psi(X,S)]   =\exists S\bigg[\bigg(\bigvee_{j\in \mathcal{K}} s_j\bigg) \land \bigg(\bigvee_{k\in \mathcal{K}} \neg s_k\bigg) \land \bigwedge_{e\in E} C_{e}\bigg],
\end{equation}
where in Eq.~\ref{eq-tran}, each edge $e\in E$ has end vertices $u,v\in V$. Propositional edge variable $x_{e}$ encodes the state of edge $e\in E$, such that $x_{e}$ is true iff $e$ is not failed, which is consistent with the representation of a realization of the stochastic graph $G(P)$ introduced earlier. An example of $F_{\mathcal{K}}$ is given in Figure~\ref{subf:unweight}-\ref{subf:fk}. Note that $F_{\mathcal{K}}$ is a $\Sigma^1_1$ formula,\footnote{Use identity $(a \wedge b) \rightarrow c \equiv \neg a \vee \neg b \vee c$ for constraints $C_e$ in Eq.~(\ref{eq-tran}).} and we define its associated set of satisfying assignments as $R_{F_\mathcal{K}}= \{X\in \Omega | (\exists S)\psi(X,S)=1\big\}$, such that $\#F_{\mathcal{K}}=|R_{F_\mathcal{K}}|$. Also, recall that the notation for the complement of set $\Theta$ is $\overline{\Theta}$. The next Lemma proves the core result of our reduction.
\begin{lemma}\label{lm:pathcount}
	For a graph $G$ $=$ $(V,E,\mathcal{K})$, edge failure probabilities $P=(p_{e})_{e\in E}$, and $F_{\mathcal{K}}$ and $\Omega_f$ as defined above, we have $\#F_{\mathcal{K}}=|\Omega_f|$. Moreover, for $P_{1/2} = (1/2)_{e \in E}$, we have
	\begin{equation*}
		u_G(P_{1/2})=\frac{\#F_{\mathcal{K}}}{2^{|E|}}.
	\end{equation*}
\end{lemma}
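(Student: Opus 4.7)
The plan is to first establish the purely combinatorial identity $R_{F_{\mathcal{K}}} = \Omega_f$, from which $\#F_{\mathcal{K}} = |R_{F_{\mathcal{K}}}| = |\Omega_f|$ follows by definition; the probability identity under $P_{1/2}$ then drops out of uniformity. The key structural observation is that, for a \emph{fixed} edge assignment $X \in \Omega$, the satisfiability of the inner CNF $\psi(X,S)$ as a constraint on the vertex variables $S$ reflects exactly whether $X$ is unsafe.

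First I would unpack each edge clause $C_e$ for $e = \{u,v\} \in E$: it is equivalent to $x_e \rightarrow (s_u \leftrightarrow s_v)$, hence it is vacuously satisfied when $x_e = 0$ and forces $s_u = s_v$ when $x_e = 1$. Consequently, any $S$ that satisfies $\bigwedge_{e \in E} C_e$ at a given $X$ must be constant on every connected component of the active subgraph $G[X] = (V, \{e \in E : x_e = 1\})$; conversely, any such per-component-constant $S$ satisfies every $C_e$. The remaining two clauses $\bigvee_{j \in \mathcal{K}} s_j$ and $\bigvee_{k \in \mathcal{K}} \neg s_k$ jointly require that the restriction of $S$ to $\mathcal{K}$ attain both Boolean values.

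With these two observations in hand, I would prove the equivalence $X \in R_{F_{\mathcal{K}}} \Leftrightarrow X \in \Omega_f$ in both directions. Forward: any witness $S$ is constant per component of $G[X]$, so two terminals receiving opposite values must lie in distinct components, whence $\Phi(X) = 0$ and $X \in \Omega_f$. Backward: if $X \in \Omega_f$ then $\mathcal{K}$ is split across at least two components of $G[X]$, so picking any one such $\mathcal{K}$-meeting component and setting $s_u = 1$ on its vertices and $s_u = 0$ on all other vertices yields an $S$ consistent with every $C_e$ and with both terminal clauses. Hence $R_{F_{\mathcal{K}}} = \Omega_f$ and $\#F_{\mathcal{K}} = |\Omega_f|$. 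For the probability identity, substituting $p_e = 1/2$ for every $e \in E$ into Eq.~(\ref{eq:probx}) gives $\Pr(X) = 2^{-|E|}$ uniformly on $\Omega$, and then Eq.~(\ref{eq:brutef}) yields
\begin{equation*}
u_G(P_{1/2}) \;=\; \sum_{X \in \Omega_f} \Pr(X) \;=\; \frac{|\Omega_f|}{2^{|E|}} \;=\; \frac{\#F_{\mathcal{K}}}{2^{|E|}}.
\end{equation*}

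The main obstacle I expect is the backward direction of the equivalence, where one must explicitly construct a 2-coloring witness $S$ from the topological fact that $\mathcal{K}$ is disconnected in $G[X]$; once that construction is in place, the rest of the argument reduces to mechanical rewriting of the edge clauses and a one-line probability computation under the uniform $P_{1/2}$.
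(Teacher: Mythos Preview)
Your proof is correct and follows essentially the same logic as the paper's: both establish the set identity $R_{F_{\mathcal{K}}} = \Omega_f$ by showing that a witness $S$ exists precisely when $\mathcal{K}$ is disconnected in the active subgraph $G[X]$. The paper argues via the complements $\overline{R_{F_{\mathcal{K}}}} \leftrightarrow \overline{\Omega_f}$ using contradiction along a connecting path, whereas your explicit ``constant on connected components'' reading of the $C_e$ constraints yields a cleaner, constructive version of the same argument.
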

\begin{proof}
	We use ideas from our previous work~\cite{Duenas-Osorio2017}, which deals with the special case $|\mathcal{K}|=2$. First, note that for sets $A$ and $B$ such that $|A| + |\overline{A}| = |B| + |\overline{B}|$, we have $|A|=|B|$ iff there is a bijective mapping from $\overline{A}$ to $\overline{B}$. Moreover, the number of unquantified variables in Eq.~(\ref{eq-relnet}) is $|E|$, so we can establish the next equivalence between the number of distinct edge variable assignments and system states: $|R_{F_{\mathcal{K}}}|+|\overline{R_{F_{\mathcal{K}}}}|=|\Omega_f|+|\overline{\Omega_f}|=2^{|E|}$. Next, we prove $X\in \overline{R_{F_{\mathcal{K}}}} \iff X \in \overline{\Omega_f}, \forall X\in \{0,1\}^{|E|}$, via a bijective mapping.
	
	1) Case $\overline{\Omega_f} \to \overline{R_{F_{\mathcal{K}}}}$: assume $X\in\overline{\Omega_f}$, i.e. $\phi(X)=1$ or $G$ is $\mathcal{K}$-connected. Next, we show that $X\in \overline{R_{F_{\mathcal{K}}}}$, i.e., $F_{\mathcal{K}}(X,S)$ evaluates to false for all possible assignments of variables $S$, due to Eqs.~\ref{eq-tran}-\ref{eq-relnet}. We show this by way of contradiction. Assume there is an assignment $S\in\{0,1\}^{|V|}$ such that $F_{\mathcal{K}}(X,S)$ is true. We deduce this happens iff (i) $\exists j,k\in\mathcal{K}$ such that $s_j\neq s_k$, from Eq.~(\ref{eq-relnet}), and (ii) for every edge $e\in E$ with end-vertices $u,v\in V$ we have $s_v=1$ (resp. $s_u=1$) whenever $x_e$  and $s_u$ (resp. $s_v$) are equal to $1$, due to clause $C_{e}$ in Eq.~(\ref{eq-tran}). Without loss of generality, we satisfy condition (i) setting $s_j=1$ and $s_k=0$, with $j,k\in\mathcal{K}$. Recall $X\in\overline{\Omega_f}$, i.e. $\phi(X)=1$, so there is a path $P=\{j,\dots,k\}\subseteq V$ connecting vertices $j,k\in\mathcal{K}$ and traversing edges $T\subseteq E$ such that $x_e =1,\forall e\in T$. By iterating over constraints $C_{e}$, $\forall e\in T$, and since $s_j=1$, we are forced to assign $s_i=1,\forall i\in P$, to satisfy condition (ii). This assignment results into $s_k=1$, which contradicts condition (i) when we have set $s_k=0$ at the beginning. Thus, an $S\in\{0,1\}^{|V|}$ such that $F_{\mathcal{K}}(X,S)$ is true does not exists, and $X \in  \overline{R_{F_{\mathcal{K}}}}$.
	
	2) Case $\overline{R_{F_{\mathcal{K}}}} \to \overline{\Omega_f}$: assume $X\in\overline{R_{F_{\mathcal{K}}}}$, i.e. $F_{\mathcal{K}}(X,S)$ is false, to show that $X\in\overline{\Omega_f}$. Again, by way of contradiction, we assume $X\in\Omega_f$ and using the arguments from above we deduce that the set of edges $T=\{e\in E| x_e=1\}$ connects every pair of vertices $i,j\in \mathcal{K}$, i.e. $\phi(X)=1$ by definition of $\phi$. This contradicts the definition $\Omega_F=\{X\in \{0,1\}^{|E|}| \phi(X)=0\}$. Thus, we conclude $X\in \overline{\Omega_f}$.
	
	Since we established a bijective mapping between $\overline{R_{F_{\mathcal{K}}}}$ and $\overline{\Omega_f}$, we conclude $\#F_{\mathcal{K}} = |\Omega_f|$.
	The last part of the lemma follows by noting that $\Pr(X)=1/2^{|E|}$ when $P=P_{1/2}$, so that $\unrelf=\sum_{x\in\Omega} (1-\Phi(X))\cdot\Pr(X)= |\Omega_f|\cdot 1/2^{|E|}=\#F_{\mathcal{K}}/2^{|E|}$.
\end{proof}
Now we generalize $u_G(P_{1/2}) =\#F_{\mathcal{K}}/2^{|E|}$ to arbitrary edge failure probabilities. To this end, we use a weighted to unweighted transformation~\cite{Duenas-Osorio2017}.
\subsection{Addressing arbitrary edge failure probabilities}
\noindent The next definitions will be useful for stating our weighted-to-unweighted transformation. Let $0.b_1\cdots b_m$ be the binary representation of probability $q\in(0,1)$, i.e. $q=\sum_{k=1}^{m}b_k/2^k$. Define $z_k$ ($\bar{z}_k$) as the number of zeros (ones) in the first $k$ decimal bits of the binary representation. Formally, $z_k=k-\sum_{i=1}^{k}b_i$ and $\bar{z}_k=k-z_k$, $\forall k\in L$, with $L=\{1,\dots,m\}$. Moreover, for $V=\{v_0,\dots,v_{z_m+1}\}$, define a function $\eta:L\to V\times V$ such that $\eta(k)=(v_{z_{k-1}},v_{z_k})$ if $b_k=0$, and $\eta(k)=(v_{z_{k-1}},v_{z_m+1})$ otherwise. We will show that, for $E=\bigcup_{k\in L}\eta(k)$ and $\mathcal{K}=\{v_0,v_{z_m+1}\}$, $G(V,E,\mathcal{K})$ is a series-parallel graph such that $r_G(P_{1/2})=q$. Thus, our weighted-to-unweighted transformation entails replacing every edge $e\in E$ with failure probability different from 1/2 with a reliability preserving series-parallel graph $G^e$.

For example, from Figure~\ref{subf:weight}, the binary representation of $1-p_e=5/8$ is 0.101, so we have $m=3$, $z_m=1$, and $\bar{z}_m=2$. Also,  we replace edge $e_2$ with a series parallel graph $G^{e_2}$ using the construction from above, which yields $V^{e_2}=\{v_0, v_1, v_2\}$, $E^{e_2}=\{(v_0, v_{2}),(v_0, v_1),(v_1, v_2)\}$, and terminal set $\mathcal{K}^{e_2}=\{v_0, v_2\}$. Since $u_{G^{e_2}}(P_{1/2})=3/8$, we replace $e_2$ by $G^{e_2}$ as shown in Figure~\ref{subf:unweight}, where $v_0=a$ and $v_2=c$, for consistency with the global labeling of the figure. The next lemma proves the correctness of this transformation.
\begin{lemma} \label{lm:chaingraph}
	Given probability $q=0.b_1\cdots b_m$ in binary form, graph $G=(V,E,\mathcal{K})$ such that $V=\{v_0,\dots,v_{z_m+1}\}$, $E=\{\eta(1),\dots,\eta(m)\}$ and $\mathcal{K}=\{v_0, v_{z_m+1}\}$, and $P_{1/2}=(1/2)_{e\in E}$, we have $r_{G}(P_{1/2})=q$ and $|V|+|E|=z_{m}+2+m$.
\end{lemma}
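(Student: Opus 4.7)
The plan is to induct on the number of bits $m$, peeling off the leading bit via the identity $q = b_1/2 + q'/2$, where $q' = 0.b_2\cdots b_m$. For the base case $m=1$: if $b_1=0$ the sink vertex $v_2$ is isolated, so the reliability is $0 = q$; if $b_1=1$ the graph is a single edge between both terminals, so the reliability is $1/2 = q$. The size identity $|V|+|E|=z_m+2+m$ is immediate in both subcases.

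For the inductive step, let $G' = (V',E',\mathcal{K}')$ be the graph associated with the shorter probability $q'$ with zero-counts $z'_k$, satisfying $r_{G'}(P_{1/2})=q'$ by the inductive hypothesis. Case $b_1=0$: here $z_k = z'_{k-1}+1$ for $k\geq 1$, and a direct inspection of $\eta(k)$ for $k\geq 2$ shows that the edges of $G$ coincide with those of $G'$ after shifting every vertex index up by one, and $\eta(1)=(v_0,v_1)$ is a new edge prepending a fresh source $v_0$ to the relabeled $G'$. By independence, $r_G = (1/2)\cdot r_{G'} = q'/2 = q$, and we add one vertex and one edge, matching $(z'_{m-1}+1)+2+m$. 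Case $b_1=1$: here $z_k = z'_{k-1}$, so the vertex set is unchanged and the edges $\eta(k)$ for $k\geq 2$ coincide exactly with those of $G'$; only $\eta(1)=(v_0,v_{z_m+1})$ is new, a parallel direct edge between both terminals. Inclusion-exclusion on the independent events ``direct edge works'' and ``$G'$ connects its terminals'' yields $r_G = 1/2 + q' - q'/2 = 1/2 + q'/2 = q$, and only one edge is added, matching $z'_{m-1}+2+m$.

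The main obstacle is the index bookkeeping: one must pin down that the endpoints of $\eta(k)$ for $k\geq 2$ map correctly to those of $\eta'(k-1)$ under the relabeling dictated by $b_1$, in both the $b_k=0$ subcase (interior chain edge) and the $b_k=1$ subcase (edge connecting to the sink). Once this correspondence is set out carefully, both the reliability identity and the size identity reduce to routine algebraic checks.
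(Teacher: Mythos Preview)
Your induction on $m$ is correct and constitutes a genuinely different argument from the paper's. The paper proceeds non-inductively: it observes that $G$ is series-parallel, explicitly enumerates all $v_0$--$v_{z_m+1}$ paths $T_1,\dots,T_{\bar z_m}$ (one per bit position $k_j$ with $b_{k_j}=1$), and computes $r_G$ as the sum of the probabilities of the disjoint events $\overline{T}_1\cdots\overline{T}_{j-1}T_j$, each of which it shows equals $1/2^{k_j}$, so the total is $\sum_k b_k/2^k=q$. Your approach instead peels off the leading bit and recognizes the resulting graph as either $G'$ with a fresh source prepended in series (when $b_1=0$) or $G'$ with a direct terminal-to-terminal edge added in parallel (when $b_1=1$), reducing the reliability computation to the one-line identities $r_G=\tfrac12 r_{G'}$ and $r_G=\tfrac12+\tfrac12 r_{G'}$ respectively. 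The paper's route makes the path structure of $G$ fully explicit, which is informative, but requires more careful bookkeeping about which edges lie in $E_{T_j}$ and why the events $\overline{T}_1\cdots\overline{T}_{j-1}T_j$ have the claimed probabilities; your recursive series/parallel decomposition is shorter and avoids that bookkeeping entirely, at the cost of hiding the global path structure. One minor remark: your base case $m=1$, $b_1=0$ corresponds to $q=0$, which is outside the paper's standing assumption $q\in(0,1)$, but since the construction and both claims are still well-defined and true there, including it does no harm and keeps the induction clean.
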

\begin{proof}
		Define $G_k=(V_k,E_k), \forall k\in L$, with $E_k=\{\eta(1),\dots,\eta(k)\}$ and $V_k=\cup_{i=1}^{k}\{v_j:v_j\in\eta(i)\}$. Clearly, $V=V_m$ and $E=E_m$. The key observation is that $G$ is a series-parallel graph and that we can enumerate all paths from $v_0$ to $v_{z_m+1}$ in $G$. Let $k_1=\min\{k\in L:b_{k_1}=1\}$. Then, the edge set $E_{T_1}=E_{k_1}$ forms a path from $v_0$ to $v_{z_m+1}$, denoted $T_1$, with vertex sequence $(v_0,\dots,v_{z_{k_1}},v_{z_m+1})$, size $|E_{T_{1}}|=z_{k_1}+1$, and $\Pr(T_1)=1/2^{z_{k_1}+1}$. Next, for $k_2$ the second smallest element of $L$ such that $b_{k_2}=1$, $G_{k_2}$ contains a total of two paths, $T_1$ and $T_2$, with $T_1$ as before and $E_{T_2}=E_{k_2} \setminus \{(v_{z_{k_1}}, v_{m+1})\}$ of size $z_{k_2}+1$. Also, $E_{k_2}=E_{T_1}\cup E_{T_2}$ and $E_{T_1}\cap E_{T_2}=E_{T_1}\setminus \{(v_{z_{k_1}}, v_{m+1})\}$. Thus, the event $\overline{T}_1T_2$ happens iff edge $(v_{z_{k_1}}, v_{m+1})$ fails and edges in $E_{T_2}$ do not fail, letting us write $\Pr(\overline{T}_1T_2)=1/2\cdot 1/2^{z_{k_2}+1}$. For $k_j$ the $j$-th smallest element of $L$ such that $b_{k_j}=1$, $G_{k_j}$ has a total of $j=\bar{z}_{k_j}$ paths, with $E_{T_j}=E_{k_j} \setminus\cup_{i=1}^{j-1} \{(v_{z_{k_i}}, v_{m+1})\}$, $|E_{T_j}|=z_{k_k}+1$, and $E_{k_j}=\cup_{i=1}^{j}E_{T_i}$. Furthermore, event $\overline{T}_1\cdots \overline{T}_{j-1}T_{j}$ happens iff edges in $\cup_{i=1}^{j-1} \{(v_{z_{k_{i}}}, v_{m+1})\}$ fail and edges in $E_{T_j}$ do not fail. Thus,  $\Pr(\overline{T}_1\cdots \overline{T}_{j-1}T_{j})=1/2^{\bar{z}_{k_j}-1}\cdot 1/2^{z_{k_j}+1}=1/2^{k_j}$.This leads to $\relf=\Pr(T_1)+\Pr(\overline{T}_1T_2)+\dots+\Pr(\overline{T}_1\cdots \overline{T}_{\bar{z}_m-1}S_{\bar{z}_m})=\sum_{i=1}^{\bar{z}_m}1/2^{k_i}$. Rewriting the summation over all $k\in L$ yields $\relf=\sum_{k=1}^{m}b_k/2^{k}$, which is the decimal form of $q=0.b_k\cdots b_m$. Furthermore, $|V|=z_m+2$ and $|E|=m$ from their definitions.
\end{proof}
Now we leverage Lemma~\ref{lm:chaingraph} to introduce our general counting-based algorithm for the {\kterminal} problem.
\subsection{The new algorithm: {\relnet}}
{\relnet} is presented in Algorithm~\ref{alg:relnet}. Theorem~\ref{th:3} proves its correctness. Figure~\ref{fig:relnet} illustrates the exact version beginning with the reduction to failure probabilities of 1/2, and rounding up with the construction of $F_\mathcal{K}$ and exact counting of its satisfying assignments. In Algorithm~\ref{alg:relnet}, however, we use an approximate counter giving $(\epsilon,\delta)$ guarantees~\cite[][Chapter 4]{Meel2017}.
\begin{theorem}\label{th:3}
	Given an instance {\instance} of the {\kterminal} problem and $M$ defined as in Algorithm~\ref{alg:relnet}:
	\begin{equation*}
	\unrelf=\#F_{\mathcal{K}}/2^M.
	\end{equation*}
\end{theorem}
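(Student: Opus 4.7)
The plan is to combine Lemmas~\ref{lm:pathcount} and~\ref{lm:chaingraph} through the weighted-to-unweighted reduction described just before the algorithm. Specifically, I would interpret $M$ as the total number of edges in the reduced graph $G'$ obtained from $G$ by replacing every edge $e \in E$ that has failure probability $p_e \neq 1/2$ with the series-parallel gadget $G^e$ constructed as in Lemma~\ref{lm:chaingraph} (so that $u_{G^e}(P_{1/2}) = p_e$), while retaining edges with $p_e = 1/2$ unchanged. Writing $m_e$ for the number of edges introduced by each gadget and $m_e = 1$ for unchanged edges, we then have $M = \sum_{e \in E} m_e = |E'|$.

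Next, I would argue that the reduction preserves $\mathcal{K}$-terminal unreliability, i.e.\ $u_{G'}(P_{1/2}) = u_G(P)$. Because each gadget $G^e$ is series-parallel with its pair of distinguished terminals glued to the endpoints of $e$, and because the gadget's internal edges are independent of all edges outside $G^e$, the indicator that $G^e$ fails to connect its two terminals is a Bernoulli random variable with mean $p_e$, and these indicators are mutually independent across $e \in E$. Hence, for every realisation $X \in \{0,1\}^E$ of the original edge states, the probability that the corresponding gadget-level events $\{G^e \text{ fails}\}_{e \in E}$ match $X$ equals $\Pr(X)$, and the residual $\mathcal{K}$-connectivity of $G'$ under the gadgets' outcomes coincides with $\Phi(X)$ evaluated on $G$. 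Summing over $X$ gives $u_{G'}(P_{1/2}) = u_G(P)$.

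Finally, I would apply Lemma~\ref{lm:pathcount} to the unweighted instance $(G', P_{1/2})$, which yields
\begin{equation*}
u_{G'}(P_{1/2}) = \frac{\#F_{\mathcal{K}}}{2^{|E'|}} = \frac{\#F_{\mathcal{K}}}{2^{M}},
\end{equation*}
where $F_{\mathcal{K}}$ is built from $G'$ using Eqs.~(\ref{eq-tran})--(\ref{eq-relnet}) with $\mathcal{K}$ unchanged (the gadgets only introduce new non-terminal vertices, which is consistent with Lemma~\ref{lm:chaingraph}). Chaining the two equalities gives the claimed identity.

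The main obstacle I anticipate is rigorously justifying the reliability-preserving substitution step, namely that replacing each weighted edge by an independent series-parallel gadget induces the same joint distribution over $\mathcal{K}$-connectivity outcomes as the original weighted graph. The key points are independence across gadgets (inherited from the product structure of $P_{1/2}$ over disjoint edge sets of $G'$) and per-gadget correctness (from Lemma~\ref{lm:chaingraph}, which certifies that the gadget's terminal-pair connectivity has the prescribed marginal). Everything else is bookkeeping: identifying $M$ with $|E'|$ and observing that the $\mathcal{K}$-set of $G'$ is the same as that of $G$.
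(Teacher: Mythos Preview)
Your proposal is correct and follows essentially the same two-step chain as the paper: first use Lemma~\ref{lm:chaingraph} (plus the gadget substitution) to reduce $(G,P)$ to $(G',P_{1/2})$ with $u_G(P)=u_{G'}(P_{1/2})$, then apply Lemma~\ref{lm:pathcount} to obtain $u_{G'}(P_{1/2})=\#F_{\mathcal{K}}/2^{|E'|}=\#F_{\mathcal{K}}/2^M$. If anything, you are more explicit than the paper about why the edge-by-gadget substitution preserves $\mathcal{K}$-terminal unreliability (independence across disjoint gadget edge sets plus the per-gadget marginal from Lemma~\ref{lm:chaingraph}); the paper simply attributes that step to Lemma~\ref{lm:chaingraph} without spelling out the coupling argument.
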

\begin{proof}
	The proof follows directly from Lemmas~\ref{lm:pathcount} and~\ref{lm:chaingraph}. First, note that the transformation in step 1 of {\relnet} outputs an instance ($G',P_{1/2}$) so that $\unrelf=u_{G'}(P_{1/2})$, where $P_{1/2}$ denote edges in $E'$ that fail with probability 1/2~(Lemma~\ref{lm:chaingraph}). Then, step 2 takes $G'$ to output $F_{\mathcal{K}}$ such that $u_{G'}(P_{1/2}) =|R_{F_{\mathcal{K}}}|/2^M$ (Lemma~\ref{lm:pathcount}). Finally, $\unrelf =|R_{F_{\mathcal{K}}}|/2^M$.
\end{proof}

\begin{algorithm}
	\caption{{\relnet}}
	\label{alg:relnet}
	\begin{algorithmic}[1]
		\Statex \textbf{Input:} Instance {\instance} and $(\epsilon, \delta)$-parameters.
		\Statex \textbf{Output:} PAC estimate {\unrele}.
		\State Construct $G'$$=$$(V',E',\mathcal{K})$ replacing every edge $e\in E$ by $G^e$ such that $1-p_e=0.b_1\cdots b_{m_e}$ and $u_{G^e}(P_{1/2})=p_e$~(Lemma~\ref{lm:chaingraph}).
		\State Let $M = \sum_{e \in E} m_e = |E'|$, and construct $F_{\mathcal{K}}$ using $G'$ from Eq.~\ref{eq-relnet}.
		\State Invoke ApproxMC2, a hashing-based counting technique~\cite[][Chapter 4]{Meel2017}, to compute $\overline{\#F_{\mathcal{K}}}$, an approximation of $\#F_{\mathcal{K}}$ with $(\epsilon,\delta)$ guarantees.
		\Statex ${\unrelef} \gets \overline{\#F_{\mathcal{K}}}/2^{|M|}$
	\end{algorithmic}
\end{algorithm}
Steps 1-2 run in polynomial time on the size of $(G,P)$. Step 3 invokes ApproxMC2~\cite{Meel2017} to approximate $\#F_{\mathcal{K}}$. In turn, ApproxMC2 has access to a SAT-oracle, running in polynomial time on $\log 1/\delta$, $1/\epsilon$, and $|F_{\mathcal{K}}|$. Thus, relative to a SAT-oracle, {$\mathcal{K}$}-RelNet approximates {\unrel} with $(\epsilon,\delta)$ guarantees in the FPRAS theoretical sense. Also, we note that ApproxMC2's $(\epsilon, \delta)$ guarantees are for the multiplicative error $\Pr(1/(1+\epsilon)\unrelf\leq\unrelef\leq(1+\epsilon)\unrelf )\geq 1-\delta$~\cite{Meel2017}. This is a tighter error constraint than the relative error of Eq.~(\ref{eq:pac}), as one can show that $1-\epsilon \leq 1/(1+\epsilon)$ for $\epsilon\in(0,1)$. Thus, if an approximation method satisfies the multiplicative error guarantees, then it also satisfies the relative error guarantees. The converse is not true, and herein we will omit this advantage of {\relnet} over other methods for ease of comparison. Moreover, a SAT-oracle is a SAT-solver able to answer satisfiability queries with up to a million variables in practice. $F_{\mathcal{K}}$ has $|V'|+|E'|$ variables. {\relnet}'s theoretical guarantees now demand context relative to other existing methods, to then perform  computational experiments verifying its performance in practice.

\section{Context Relative to Competitive Methods}\label{sec:related_work}
This section briefly contextualizes our work relative to competitive techniques for network reliability evaluation, so as to facilitate the comparative analyses in Section~4. We arrange methods into three groups: exact or bounds, guarantee-less simulation, and probably approximately correct (PAC).
\subsection{Group (i): Exact or Bounds}
\noindent Network reliability belongs to the computational complexity class \#P-complete, which is largely believed to be intractable. This means that the task of computing {\unrel} efficiently is seemingly hopeless. While of limited application, the most popular techniques in this group employ approaches such as state enumeration~\cite{Ball1995}, direct decomposition~\cite{Dotson1979}, factoring~\cite{Satyanarayana1983}, or compact data structures like binary-decision-diagrams (BDD)~\cite{Hardy2007}. We refer the reader to the cited literature for a survey of exact methods~\cite{Ball1995,Le2014}.

The intractability of reliability problems motivates exploiting properties from graph theory. For example, in the case of bounded therewidth and degree, there are efficient algorithms available~\cite{Hardy2007, Canale2016}. Another promising family of methods issues fast converging bounds~\cite{Dotson1979,Le2013}, an approach that demonstrates practical performance even in earthquake engineering applications~\cite{Lim2012}, and that is applicable beyond connectivity-based reliability as part of the more general state-space-partition principle~\cite{Paredes2018,Alexopoulos1995}.
\subsection{Group (ii): Guarantee-less simulation}
\noindent When exact methods fail, guarantee-less simulations have found wide applicability. In the context of unbiased estimators,\footnote{The quality of a guarantee-less method being unbiased is key, as boosting confidence by means of repeating experiments leveraging the central limit theorem would lack justification otherwise.} a key property is the relative variance $\sigma^2_Y/\mu_Y^2$, with $Y$ a randomized Monte Carlo procedure such that $E[Y]={\unrelf}$. From Theorem~\ref{theom:markov} (Appendix), we know that should a method verify the bounded relative variance (BRV) property, i.e., $\sigma^2_Y/\mu_Y^2\leq C$ for $C$ some constant, then an efficient $(\epsilon,\delta)$ approximation is guaranteed with a sample size of $N=O(\epsilon^{-2}\log1/\delta)$. While certain methods verify the BRV property, the value of $C$ is typically unknown for general instances of the {\kterminal} problem, and thus the central limit theorem is often invoked for drawing confidence intervals despite known caveats~\cite{Bayer2014}. Some techniques verifying the BRV property include the permutation Monte Carlo-based Lomonosov's Turnip (LT)~\cite{Gertsbakh2016} and its sequential splitting extension, the Split-Turnip (ST)~\cite{Vaisman2016}, and the importance sampling variants of the recursive variance reduction (RVR) algorithm~\cite{Cancela2014}. They significantly outperform the crude Monte Carlo (CMC) method in the rare event setting, with RVR even displaying the VRV property in select instances, as evidenced in empirical evaluations.

As we noted, the number of samples in the crude Monte Carlo approach scales like $1/{\unrelf}$, which can be problematic in highly-reliable systems. A more promising approach leverages the Markov Chain Monte Carlo method and the product estimator~\cite{Jerrum1988,Fishman1994}, where the small {\unrel} estimation is bypassed by estimating the product of larger quantities. Significantly, the sample size roughly scales like $\log1/{\unrelf}$~\cite{Dyer1991}. The product estimator is popularly referred to as multilevel splitting as it has independently appeared in other disciplines~\cite{Glasserman1999,kahn1951estimation,rosenbluth1955monte}, and even more recently in the civil and mechanical engineering fields under the name of subset simulation~\cite{Au2001}. In the case of network reliability, the latent variable formulation by Botev et al.~\cite{Botev2012}, termed generalized splitting (GS),  delivers unbiased estimates of {\unrel}. The similar approach by Zuev et al.~\cite{Zuev2015} is not readily applicable to the {\kterminal} and delivers biased samples, which can be an issue to rigorously assess confidence.
\subsection{Group (iii): PAC methods}
\noindent In a breakthrough paper, Karger gave the first efficient approximation for the all-terminal network unreliability problem~\cite{Karger2001}. However, Karger's algorithm is not always practical despite recent improvement~\cite{Karger2016}. Also, unlike {\relnet}, Karger's algorithm is not readily applicable to the more general $\mathcal{K}$-terminal network reliability problem. 

Besides our network reliability PAC approximation technique,  {\relnet}, and that is specialized to the {\kterminal} problem, there are general Monte Carlo sampling schemes that deliver $(\epsilon,\delta)$ guarantees. The reminder of this subsection highlights relevant methods that are readily implementable in Monte Carlo-based network reliability calculations. 

Denoting $Y$ the random samples produced by unbiased sampling-based estimators, traditional simulation approaches take the average of i.i.d. samples of $Y$. Such estimators can be integrated into optimal Monte Carlo simulation (OMCS) algorithms~\cite{Dagum2000}. An algorithm $A$ is said to be optimal (up to a constant factor) when its sample size $N_A$ is not proportionally larger in expectation than the sample size $N_B$ of any other algorithm $B$ that is also an $(\epsilon,\delta)$ randomized approximation of $\mu_Y$, and that has access to the same information as $A$, i.e., $E[N_A] \leq c \cdot E[N_B]$ with $c$ a universal constant.

\begin{algorithm}
	\caption{ Stopping Rule Algorithm ({\sra})~\cite{Dagum2000}.}
	\small
	\label{alg:sra}
	\begin{algorithmic}[1]
		\Statex \textbf{Input:} $\epsilon, \delta\in(0,1)$ and random variable $Y$.
		\Statex \textbf{Output:} Estimate $\unrelef$ with PAC guarantees.
		\Statex Let $\{Y_i\}$ be a set of i.i.d samples of $Y$.
		\Statex Compute constants $\Upsilon = 4(e-2)\log(2/\delta)1/\epsilon^2$, $\Upsilon_1 = 1+(1+\epsilon)\cdot\Upsilon$.
		\Statex Initialize $S\gets 0$, $N\gets 0$.
		\Statex \textbf{while} $(S<\Upsilon_1)$ \textbf{do:} $N\gets N + 1$, $S\gets S+Y_N$.
		\Statex $\unrelef\gets \Upsilon_1/N$
	\end{algorithmic}
\end{algorithm}
A simple and general purpose black box algorithm to approximate {\unrel} with PAC guarantees is the \textit{Stopping Rule Algorithm} ({\sra}) introduced by Dagum et al.~\cite{Dagum2000}. The convergence properties of {\sra} were shown through the theory of martingales and its implementation is straightforward (Algorithm~\ref{alg:sra}).

Even though {\sra} is optimal up to a constant factor for RVs with support $\{0,1\}$, a different algorithm and analysis leads to the \textit{Gamma Bernoulli Approximation Scheme} ({\gbas})~\cite{Huber2017}, which improves the expected sample size by a constant factor over {\sra} and demonstrates superior performance in practice due to improved lower order terms in its guarantees. {\gbas} has the additional advantage with respect to {\sra} of being unbiased, and it is relatively simple to implement. The core idea of {\gbas} is to construct a RV such that its relative error probability distribution is known. The procedure is shown in Algorithm~\ref{alg:gbas}, $I$ is the indicator function, $\text{Unif}(0,1)$ is a random draw from the uniform distribution bounded in $[0,1]$, and $\text{Exp}(1)$ is a random draw from an exponential distribution with parameter $\lambda=1$. Also, Algorithm~\ref{alg:gbas} requires parameter $k$, which is set as the smallest value that guarantees $\delta \geq \Pr(\mu_Y/\hat\mu_Y<(1+\epsilon)^2 \text{ or } \mu_Y/\hat\mu_Y>(1-\epsilon)^2 )$ with $\mu_Y/\hat\mu_Y\sim \text{Gamma}(k, k-1)$~\cite{Huber2017}. In practice, values of $k$ for relevant $(\epsilon,\delta)$ pairs can be tabulated. Alternatively, if one can evaluate the cumulative density function (cdf) of a Gamma distribution, galloping search can be used to find the optimal value of $k$ with logarithmic overhead (on the number of cdf evaluations).

\begin{algorithm}
	\caption{ Gamma Bernoulli Approximation Scheme ({\gbas})~\cite{Huber2017}.}
	\small
	\label{alg:gbas}
	\begin{algorithmic}
		\Statex \textbf{Input:} $k$ parameter.
		\Statex \textbf{Output:} Estimate $\unrelef$ with PAC guarantees.
		\Statex Let $\{Y_i\}$ be a set of independent samples.
		\Statex Initialize $S \gets 0$, $R\gets 0$, $N\gets 0$.
		\While{$(S \neq k)$}
		\State $N\gets N + 1$, $B \gets I(\text{Unif}(0,1) \leq Y_N )$
		\State $S\gets S+ B$, $R \gets R + \text{Exp}(1)$
		\EndWhile
		\Statex $\unrelef\gets \Upsilon_1/N$
	\end{algorithmic}
\end{algorithm}
Note that {\sra} and {\gbas} give PAC estimates with optimal expected number of samples for RVs with support $\{0,1\}$, yet they disregard the variance reduction properties of more advanced techniques. Thus, one can ponder, is there a way to exploit a randomized procedure $Y$ such that $\sigma_{Y} \ll \sigma_{Y^{CMC}}$ in the context of OMCS? The \textit{Approximation Algorithm} (\aaa), introduced by Dagum et al.~\cite{Dagum2000}, and based on sequential analysis~\cite{wald1973sequential}, gives a partially favorable answer. In particular, steps 1 and 2 (Algorithm~\ref{alg:aa}) are trial experiments that give rough estimates of $\mu_Y$ and $\sigma_Y^2/\mu_Y^2$, respectively. Then, step 3 is the actual experiment that outputs {\unrele} with PAC guarantees. {\aaa} assumes $Y\in[0,1]$, and it was shown to be optimal up to a constant factor.

The downside of {\aaa}, or any OMCS algorithm as {\aaa} is optimal, is that it requires in expectation $N_{\aaaf}=O(\max\{\sigma^2_{Y}/\mu_Y^2,\epsilon/\mu_Y\}\cdot\epsilon^{-2}\ln1/\delta)$ samples. Thus, despite considering the relative variance $\sigma^2_{Y}/\mu_Y^2$, OMCS algorithms become impractical in the rare-event regime. For example, consider the case in which edge failure probabilities tend to zero and $1/\mu_Y$ goes to infinity. If a technique delivers $Y$ that meets the BRV property, i.e., $\sigma^2_{Y}/\mu_Y^2\leq C$ for $C$ some constant, then, from Theorem~\ref{theom:markov} (Appendix), we know a sample of $N=O(\epsilon^{-2}\ln1/\delta)$ suffices, meanwhile $N_{\aaaf}\rightarrow\infty$.

\begin{algorithm}
	\caption{The Approximation Algorithm ({\aaa})~\cite{Dagum2000}.}
	\label{alg:aa}
	\begin{algorithmic}[1]
		\Statex \textbf{Input:} $(\epsilon, \delta)$-parameters.
		\Statex \textbf{Output:} Estimate $\unrelef$ with PAC guarantees.
		\Statex Let $\{Y_i\}$ and $\{Y'_i\}$ be two sets of independent samples of $Y$.
		\State $\epsilon'\gets \min\{1/2,\sqrt{\epsilon}\}$, $\delta'\gets\delta/3$
		\Statex $\hat{\mu}_Y\gets \textit{SRA}(\epsilon',\delta')$
		\State $\Upsilon\gets4(e-2)\log(2/\delta)1/\epsilon^2$
		\Statex $\Upsilon_2\gets 2(1+\sqrt{\epsilon})(1+2\sqrt{\epsilon})(1+\ln(3/2)/\ln(2/\delta))\Upsilon$
		\Statex $N\gets\Upsilon_2\cdot\epsilon/\hat\mu_Y$, $S\gets0$
		\Statex \textbf{for} ($i=1,\dots,N$) \textbf{do}: $S\gets S+(Y'_{2i-1}-Y'_{2i})^2/2$
		\Statex $\hat{r}^*_Y \gets \max\{S/N,\epsilon\hat{\mu_Y}\}/\hat\mu_Y^2$
		\State $N_{\aaaf}\gets \Upsilon_2 \cdot\hat{r}^*_Y$
		\Statex \textbf{for} ($i=1,\dots,N_{\aaaf}$) \textbf{do}: $S\gets S+Y_i$
		\Statex $\unrelef\gets S/N$
	\end{algorithmic}
\end{algorithm}
We will use {\gbas} for CMC with $(\epsilon,\delta)$ guarantees, and use {\aaa}, given its generality, to turn various existing techniques into PAC methods themselves. For {\aaa}, note that the rough estimate $\hat{\mu}_Y$ in step 1 is computed using $Y^{CMC}$ as it is the cheapest, but from step 2 and on, the estimator that is intended to be tested is used, but the reported runtime will be that of step 3 to measure variance reduction and runtime without trial experiments.

\section{Computational Experiments}\label{sec:experiments}
\noindent
A fair way to compare methods is to test them against challenging benchmarks and quantify empirical measures of performance relative to their theoretical promise. We take this approach to test $\mathcal{K}$-RelNet alongside competitive methods. The following subsections describe our experimental setting, listing implemented methods, and their application to various benchmarks.

\subsection{Implemented estimation methods}

Table~\ref{t:methods} lists reliability evaluation methods that we consider in our numerical experiments. Exact methods run until giving an exact estimate or best bounds until timeout. Each guarantee-less simulation method uses a custom number of samples $N$ that depends on the shared parameter $N_S$ (Table~\ref{t:methods}). This practice, borrowed from Vaisman et al.~\cite{Vaisman2016}, tries to account for the varying computational cost of samples among methods.

PAC algorithms {\aaa} or {\gbas} are use in combination with guarantee-less sampling methods to compare runtime given a target precision. For example, {\gbas}($Y^{\text{CMC}}$) denotes Algorithm~\ref{alg:gbas} when samples are drawn from the CMC estimator. Experiments with {\aaa} use $(\epsilon,\delta)=(0.2,0.2)$. Experiments embedded in {\gbas} use two configurations:  $(0.2,0.2)$ and $(0.2,0.05)$. {\relnet} uses $(0.8,0.2)$ to avoid time outs. As we will verify, in practice, PAC-methods issue estimates with better precision than the input theoretical $(\epsilon,\delta)$-guarantees.
\begin{table}[]
	\centering
	\caption{Methods used in computational experiments, and corresponding parameters.}
	\label{t:methods}
	{\scriptsize
		\begin{tabular}{|c|l|l|l|l|c|}
			\hline
			\multicolumn{1}{|c|}{\textbf{Group}} &
			\multicolumn{1}{c|}{\textbf{Methods}}             & \multicolumn{1}{c|}{\textbf{IDs}} & \multicolumn{1}{c|}{\textbf{Parameters}} & \multicolumn{1}{c|}{\textbf{Ref.}} \\ \hline
			i & BDD-based Network Reliability                           & HLL   &  n/a   &  \cite{Hardy2007,Herrmann2010}  \\ \hline
			\multirow{4}{*}{ii} & Lomonosov's-Turnip                                & LT            &    $N=N_S$               &                               \cite{Gertsbakh2016}          \\\cline{2-5}
			& Sequential Splitting Monte Carlo                  & ST           &      $B=100$, $N=N_S/B$             &                               \cite{Vaisman2016}           \\\cline{2-5}
			&Generalized Splitting                             & GS              &     $s=2, N_0 =10^3, N=N_S$            &                          \cite{Botev2012}        \\\cline{2-5}
			&Recursive Variance Reduction & RVR      &      $N=N_S/\binom{|\mathcal{K}|}{2}$            &                              \cite{Cancela2014}           \\\hline
			\multirow{3}{*}{iii} & Karger's 2-step Algorithm          & K2Simple              &  $\epsilon,\delta$          &                              \cite{Karger2016}           \\ \cline{2-5}
			iii & Optimal Monte Carlo Simulation                                 & 
			{\gbas, \aaa}    &        $\epsilon,\delta$                  &\cite{Dagum2000, Huber2017}  \\\cline{2-5}
			& Counting-based Network Unreliability              & {\relnet}     &   $\epsilon,\delta$      &                                   This paper  \\ \hline
		\end{tabular}
	}
\end{table}

To the best of our knowledge, methods in Table~\ref{t:methods} are some of the best in their categories as evidenced in the literature. We implemented all methods in a Python prototype for uniform comparability and ran all experiments in the same machine---a 3.60GHz quad-core Intel i7-4790 processor with 32GB of main memory and each experiment was run on a single core.

\subsection{Estimator Performance Measures}
We use the next empirical measures to assess the performance of reliability estimation methods. Let $\hat{u}$ be an approximation of $u$. We measure the \textit{observed multiplicative error} {\epso} as $(\hat{u}-u)/u$ if $\hat{u}>u$, and $(\hat{u}-u)/\hat{u}$ otherwise. Also, for a fixed PAC-method, target relative error $\epsilon$, and independent measures $\epsof^{(1)},\dots,\epsof^{(M)}$, we compute the \textit{observed confidence} parameter {\deltao} as $1/M\cdot\sum_{i=1}^{M}\mathbbm{1}(|\epsof^{(i)}|\geq\epsilon)$. Satisfaction of ($\epsilon$, $\delta$) is guaranteed, but {\epso} and {\deltao} can expose theoretical guarantees that are too conservative. 

Furthermore, for guarantee-less sampling methods we measure {\epso} but not {\deltao}, as these do not support confidence a priori. Thus, we use empirical measures of variance reduction to assess the desirability of sampling techniques over the canonical method (CMC). Let $\sigma^2_{Y^{CMC}}=\mu_Y\cdot(1-\mu_{Y})/N$ be the variance associated to CMC, and let $\sigma_{Y^A}^2$ be the sample associated to method $A$. Clearly, $\sigma^2_{Y^{CMC}}/\sigma^2_{Y^{A}}>1$ will favor $A$ over CMC. However, this is not the only important consideration in practice. For respective CPU times $\tau_{Y^\text{CMC}}$ and $\tau_{Y^A}$, a ratio $\tau_{Y^{CMC}}/\tau_{Y^{A}}<1$ would imply a higher computational cost for $A$. To account for both, variance and CPU time, we use the \textit{efficiency ratio}, defined as $\erf(Y^A) = \big(\sigma^2_{Y^{CMC}}/\sigma^2_{Y^{A}})\cdot \big(\tau_{Y^{CMC}}/\tau_{Y^{A}})$~\cite{Fishman1986c}. In practice, when $\erf(Y^A)<1$, one prefers the more straightforward CMC. A similar measure in the literature is the \textit{work normalized relative variance}~\cite{Botev2012}, defined as $\text{wnrv}(Y)=\tau_{Y}\sigma^2_Y/\mu_Y^2$, which is related to the efficiency ratio via $\erf(Y^A)=\text{wnrv}(Y^\textit{CMC})/\text{wnrv}(Y^A)$. We prefer $\erf(Y^A)$ over $\text{wnvr}(Y^A)$ as it is, ipso facto, a measure of adequacy of $A$ over CMC, informing users on whether they need to implement a more sophisticated method than CMC.\footnote{The ratio $\sigma^2_{Y^{\textit{CMC}}}/\sigma^2_{Y^{A}}$ in the {$\erf$} is also the ratio of the relative variances of $Y^{\textit{CMC}}$ and $Y^{A}$, shedding light on how many times larger (or smaller) the sample associated to CMC needs to be with respect to $A$ from Theorem~\ref{theom:markov} (Appendix).} 

The next subsections introduce the benchmarks we use and discussion of results. Also, in our benchmarks we consider sparse networks, i.e. $|E|=O(|V|)$, which resemble engineered systems.

\subsection{Rectangular Grid Networks}
We consider $N$$\times$$N$ square grids (Figure~\ref{fg:gridgraph}) because they are irreducible (via series-parallel reductions) for $N>2$, their tree-width is exactly $N$, and they can be grown arbitrarily large until exact methods fail to return an estimate. Also, failure probabilities can be varied to challenge simulation methods. Our goal is to increase $N$ and vary failure probabilities uniformly to verify running time, scalability, and quality of approximation. We evaluate performance until methods fail to give a desirable answer. In particular, we consider values of $N$ in the range $2$ to $100$. Also, assume all edges fail with probability $2^{-i}$, for $i\in\{1,3,\dots,15\}$. Furthermore, we consider extreme cases of $\mathcal{K}$~(Figure~\ref{fg:gridgraph}), namely, all-terminal and two-terminal reliability, and a $\mathcal{K}$-terminal case with terminal nodes distributed in a checkerboard pattern.

\begin{figure}
	\centering
	\pgfmathtruncatemacro{\n}{3}
	\pgfmathtruncatemacro{\ni}{\n-1}
	\pgfmathtruncatemacro{\nsize}{1}
	\pgfmathtruncatemacro{\uselabel}{0}
	\pgfmathsetmacro{\mult}{0.7}
	\ifthenelse{\isodd{\ni}}{\pgfmathtruncatemacro{\aux}{1}}{\pgfmathtruncatemacro{\aux}{0}}
	\footnotesize
	\begingroup
	\captionsetup[subfigure]{width=1.1in}
	\subfloat[All-Terminal]
	{
		\begin{tikzpicture}[darkstyle/.style={circle,draw,fill=gray!65,minimum size=\nsize},clearstyle/.style={circle,draw,fill=gray!10,minimum size=\nsize}]
		\foreach \x in {0,...,\n}
		\foreach \y in {0,...,\n}
		{\pgfmathtruncatemacro{\label}{\x + (\n+1) *  (\y) + 1}
			\node [darkstyle]  (\x\y) at (\mult*\x,\mult*\y) {\ifthenelse{\equal{\uselabel}{1}}{\label}{}};}
		\foreach \x in {0,...,\n}
		\foreach \y [count=\yi] in {0,...,\ni}
		\draw (\x\y)--(\x\yi) (\y\x)--(\yi\x);
		\end{tikzpicture}
	}
\qquad\qquad
	\subfloat[Two-Terminal]
	{
		\begin{tikzpicture}[darkstyle/.style={circle,draw,fill=gray!65,minimum size=\nsize},clearstyle/.style={circle,draw,fill=gray!10,minimum size=\nsize}]
		\foreach \x in {0,...,\n} {
			\foreach \y in {0,...,\n} {
				\pgfmathtruncatemacro{\test}{\x + (\n+1) *  (\y) + 1 }
				\pgfmathtruncatemacro{\nn}{(\n+1)^2}
				\pgfmathtruncatemacro{\label}{\x + (\n+1) *  (\y) + 1}
				\ifthenelse {\equal{1}{\test} \OR \equal{\nn}{\test}}
				{\node [darkstyle]  (\x\y) at (\mult*\x,\mult*\y) {\ifthenelse{\equal{\uselabel}{1}}{\label}{}};}
				{\node [clearstyle]  (\x\y) at (\mult*\x,\mult*\y) {\ifthenelse{\equal{\uselabel}{1}}{\label}{}};}
			}
		}
		\foreach \x in {0,...,\n}
		\foreach \y [count=\yi] in {0,...,\ni}
		\draw (\x\y)--(\x\yi) (\y\x)--(\yi\x);
		\end{tikzpicture}    
	}
\qquad\qquad
	\subfloat[$\mathcal{K}$-Terminal\label{sfig:check}]
	{
		\begin{tikzpicture}[darkstyle/.style={circle,draw,fill=gray!65,minimum size=\nsize},clearstyle/.style={circle,draw,fill=gray!10,minimum size=\nsize}]
		\foreach \x in {0,...,\n} {
			\foreach \y in {0,...,\n} {
				\pgfmathtruncatemacro{\test}{\x + (\n+\aux) *  (\y) + 1 }
				\pgfmathtruncatemacro{\label}{\x + (\n+1) *  (\y) + 1}
				\ifthenelse {\isodd{\test}}
				{\node [darkstyle]  (\x\y) at (\mult*\x,\mult*\y) {\ifthenelse{\equal{\uselabel}{1}}{\label}{}};}
				{\node [clearstyle]  (\x\y) at (\mult*\x,\mult*\y) {\ifthenelse{\equal{\uselabel}{1}}{\label}{}};}
			}
		}
		\foreach \x in {0,...,\n}
		\foreach \y [count=\yi] in {0,...,\ni}
		\draw (\x\y)--(\x\yi) (\y\x)--(\yi\x);
		\end{tikzpicture}
	}
	\endgroup
	\pgfmathparse{int(\n+1)}
	\caption{Example of an $N\times N$ grid graph, with $N = 4$. Darkened nodes belong to the terminal set $\mathcal{K}$.}
	\label{fg:gridgraph}
\end{figure}
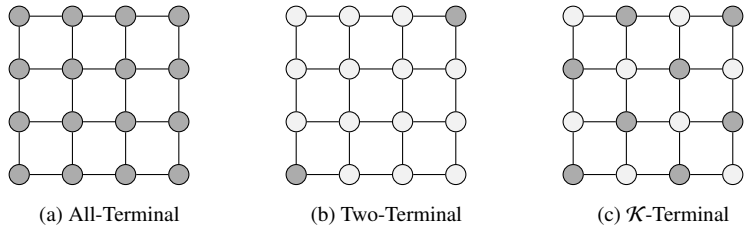

\subsubsection{Exact calculations}\label{sec:exact}
\noindent For reference, we obtained exact unreliability calculations using the BDD-based method by Hardy, Lucet, and Limnios~\cite{Hardy2007}, herein termed {\hll} due to its authors. We computed {\unrel} for $N=2,..,10$ and all values of $p_e$. Figure~\ref{fg:bdd} shows a subset of exact estimates (a-b) and exponential scaling of running time (c). Several other exact methods we studied and referenced in Section~3, were used, but {\hll} was the only one that managed to estimate {\unrel} exactly for all $N\leq10$. However, {\hll} became memory-wise more consuming for $N>10$. Thus, if memory is the only concern, the state-space partition can be used instead to get anytime bounds on {\unrel} at the expense of larger runtime, but storing at most $O(|E|)$ vectors $X\in\{0,1\}^m$ simultaneously~\cite{Paredes2018}. Next, we use these exact estimates to compute {\epso} and {\erf} for guarantee-less simulation methods, and to compute {\epso} and {\deltao} for PAC methods.
\begin{figure}
	\centering
	\subfloat[{\allterminal}.]{\includegraphics[width=0.33\textwidth]{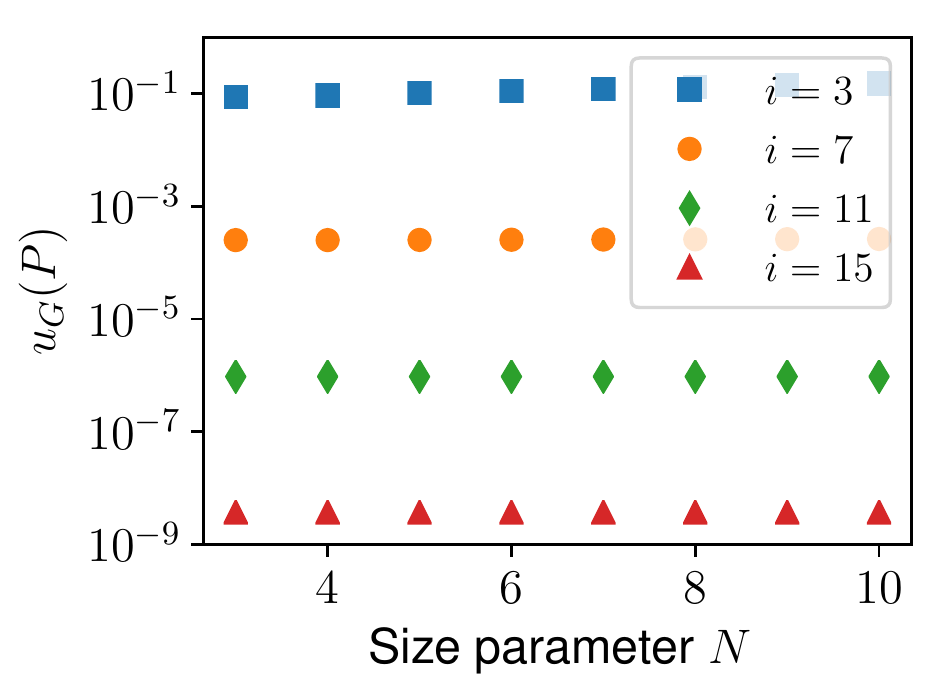}}\hfill
	\subfloat[{\twoterminal}.]{\includegraphics[width=0.33\textwidth]{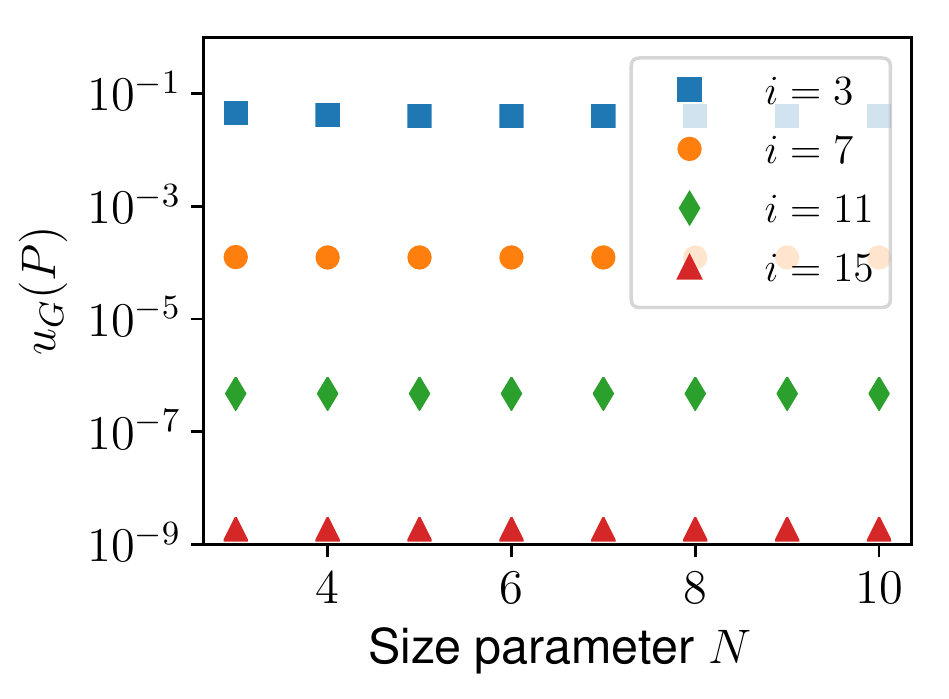}}\hfill
	\subfloat[{all cases}.]{\includegraphics[width=0.33\textwidth]{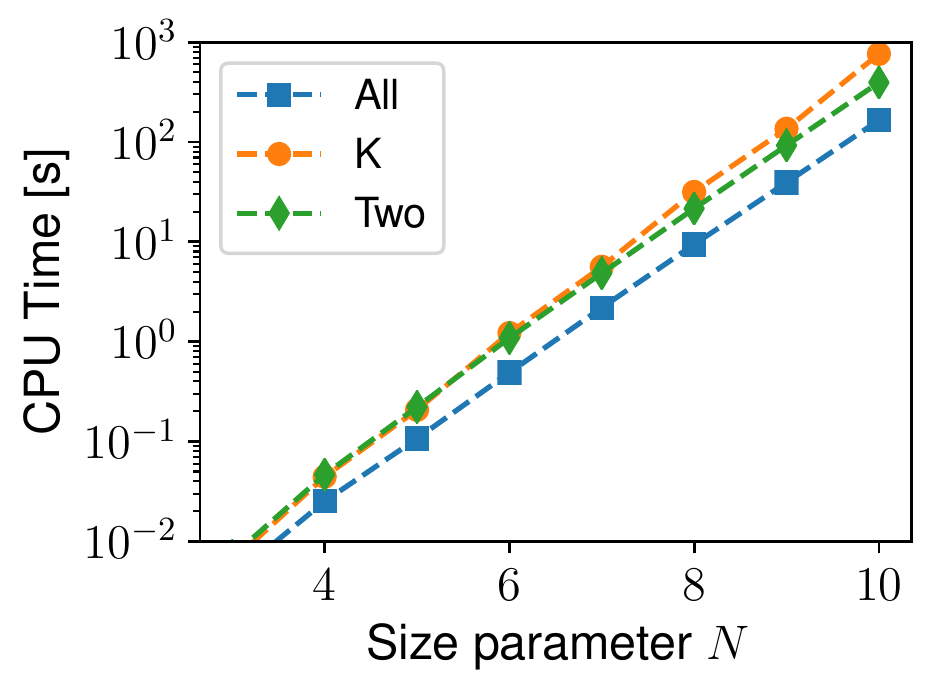}}\hfill
	\caption{(a-b) Exact estimates of {\unrel} and (c) CPU time using HLL for all-, two-, and {$\mathcal{K}$}-terminal cases.}
	\label{fg:bdd}
\end{figure}

\subsubsection{Guarantee-less simulation methods} \noindent Figure~\ref{fg:grids_eps} shows values of {\epso}  for the case of {\twoterminal} and setting $N_S=10^4$. Most values are below the $\epsof=0.2$ threshold. For RVR we observed values of {\epso} in the order of the float-point precision for the largest values of $i$. We attribute this to the small number of cuts with maximum probability (2-4 in our case) that, together with the fact that RVR finds them all in the decomposition process, endows RVR with the VRE property in this case. Conversely, other methods do not rely as heavily on these small number of cuts.
\begin{figure}
	\centering
	\subfloat[LT]{\includegraphics[width=0.25\textwidth]{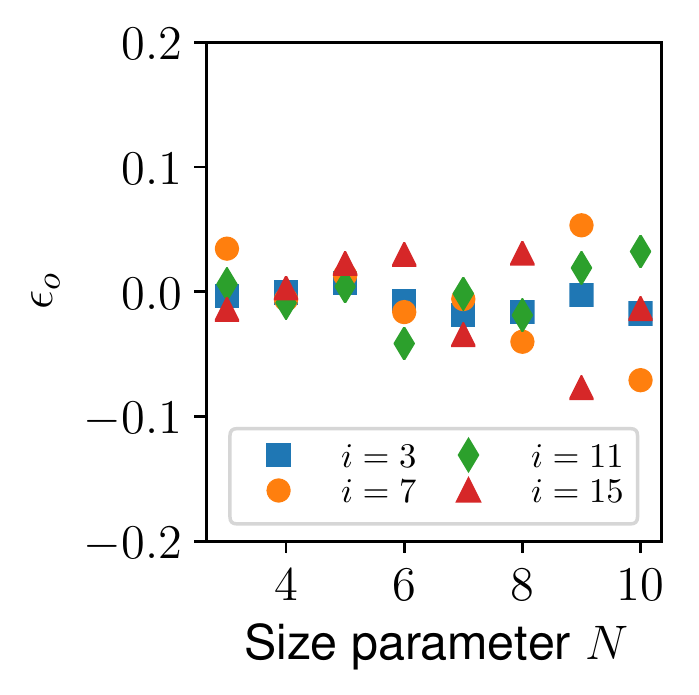}}
	\subfloat[ST]{\includegraphics[width=0.25\textwidth]{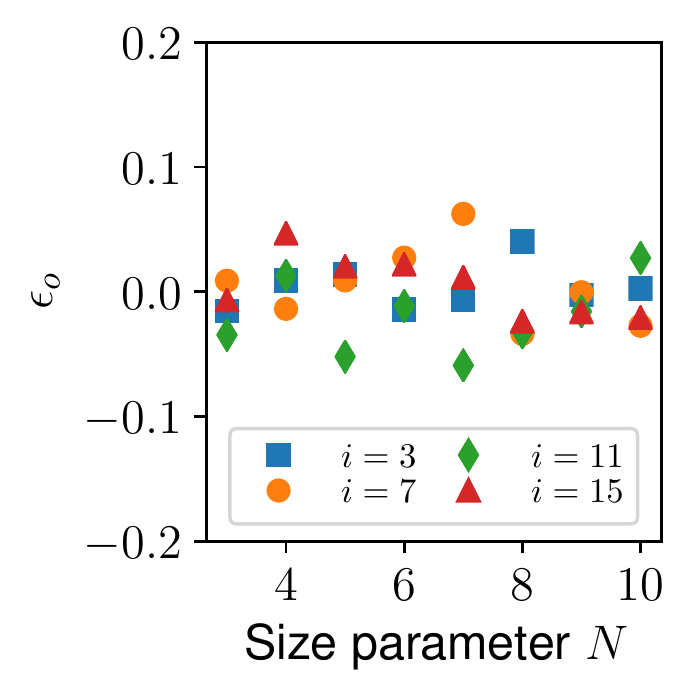}}
	\subfloat[GS]{\includegraphics[width=0.25\textwidth]{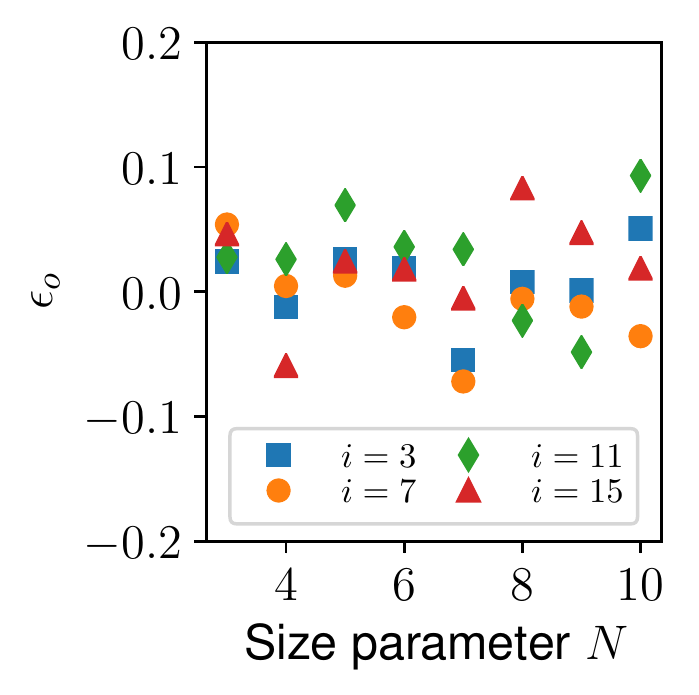}}
	\subfloat[RVR]{\includegraphics[width=0.25\textwidth]{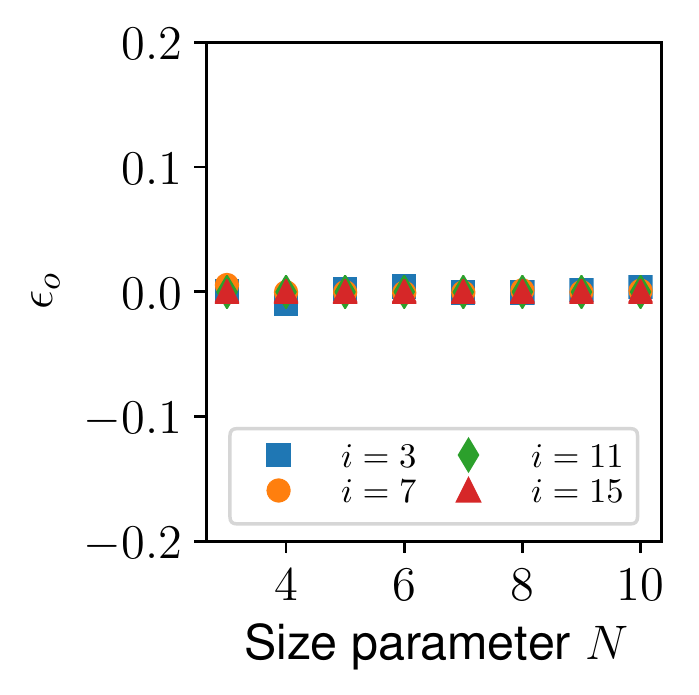}}
	\caption{Multiplicative error {\epso} for guarantee-less simulation methods in the {\twoterminal} case.}
	\label{fg:grids_eps}
\end{figure}

Moreover, the CPU time varied among methods as shown in Figure~\ref{fg:grids_time}. The only method whose single sample computation is affected by the values of $i$ is GS, consistent with the expected number of levels, which scales as $\log1/\unrelf$. However, matrix exponential operations for handling more cases of $i$ added overhead in LT and ST; the sharp time increase from $N=5$ to $N=6$ is due to this operation, consistent with findings by Botev et al.~\cite{Botev2012}. Instead, RVR does not suffer from numerical issues and appears to verify the VRV property in this grid topology.
\begin{figure}
	\centering
	\subfloat[LT]{\includegraphics[width=0.25\textwidth]{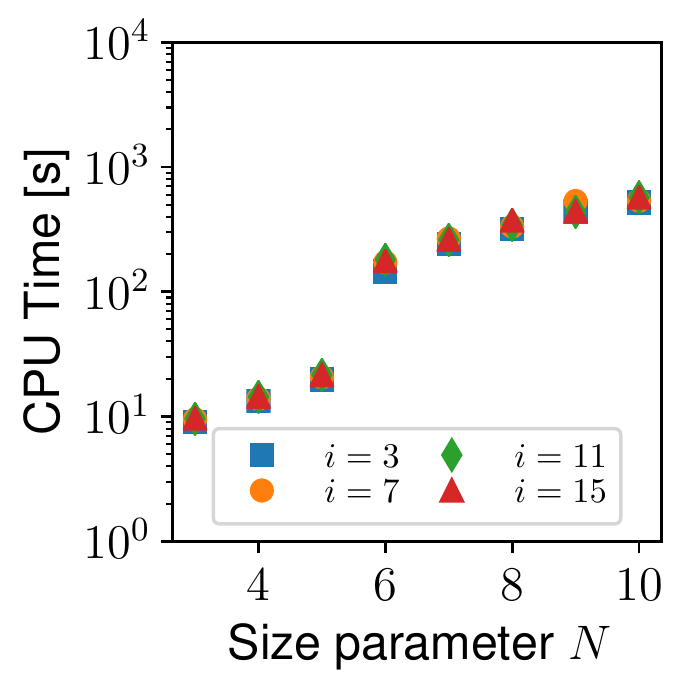}}
	\subfloat[ST]{\includegraphics[width=0.25\textwidth]{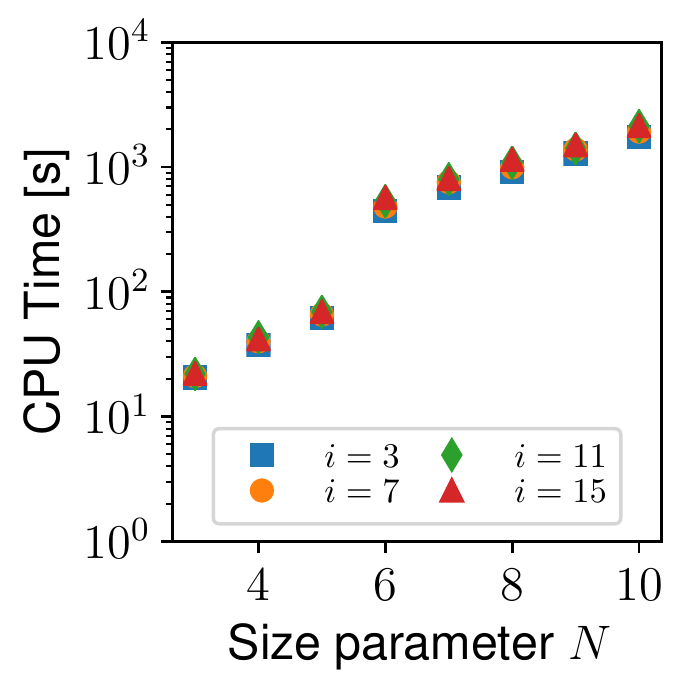}}
	\subfloat[GS]{\includegraphics[width=0.25\textwidth]{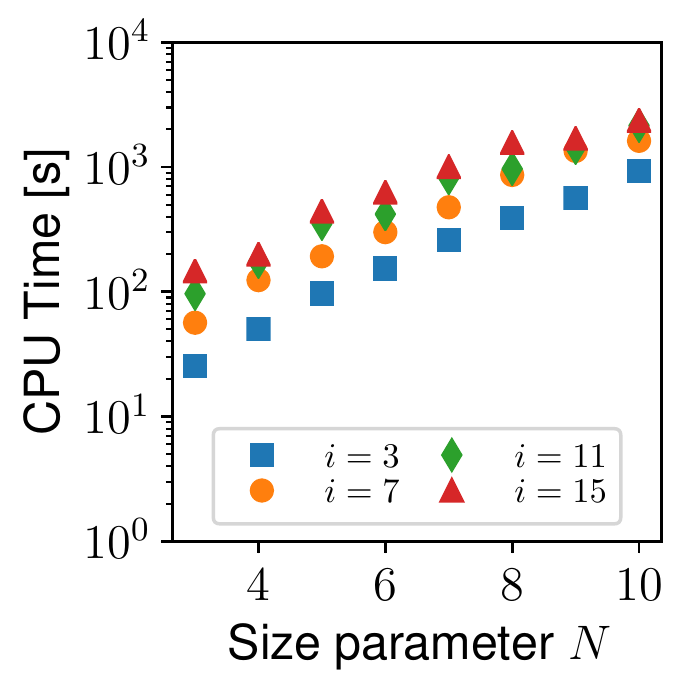}}
	\subfloat[RVR]{\includegraphics[width=0.25\textwidth]{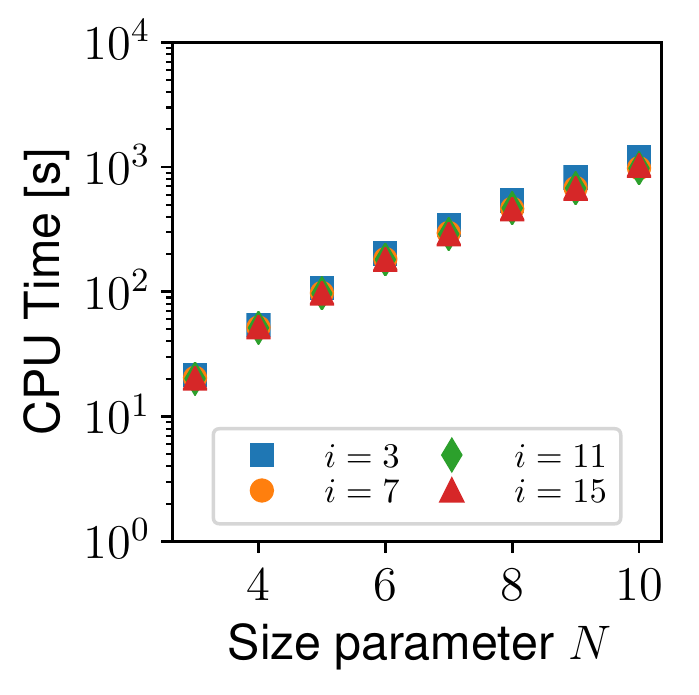}}
	\caption{CPU time for guarantee-less simulation methods in the {\twoterminal} case.}
	\label{fg:grids_time}
\end{figure}

Also, to compare all methods in a uniform fashion we used the efficiency ratio (Figure~\ref{fg:grids_er}). Values of $\sigma_{Y^{\text{CMC}}}^2$ for computing the efficiency ratio are exact from HLL, and CPU time $\tau_{Y^{\text{CMC}}}$ is based on $10^4$ samples. Estimates below the horizontal line are less reliable than those obtained with CMC for the same amount of CPU time. In particular, we note that for less rare failure probabilities ($2^{-7}\approx0.008$) some methods fail to improve over CMC. Missing values for RVR show improvements above $10^7$ in the efficiency ratio which, again, can be attributed to it meeting the VRE property in these benchmarks. Furthermore, an interesting trend among simulation methods is that there is a downward trend in their efficiency ratio as $N$ grows. Thus, we can construct an arbitrarily large squared grid for some $N$ that will, ceteris paribus, yield an efficiency ratio below 1 in favor of CMC. We attribute this to the time complexity of CMC samples in sparse graphs, which can be computed in $O(|V|)$ time, whereas other techniques run in $O(|V|^2)$ time or worse. Thus, the larger the graph the far greater the cost per sample by more advanced techniques with respect to CMC. 
\begin{figure}
	\centering
	\subfloat[LT]{\includegraphics[width=0.25\textwidth]{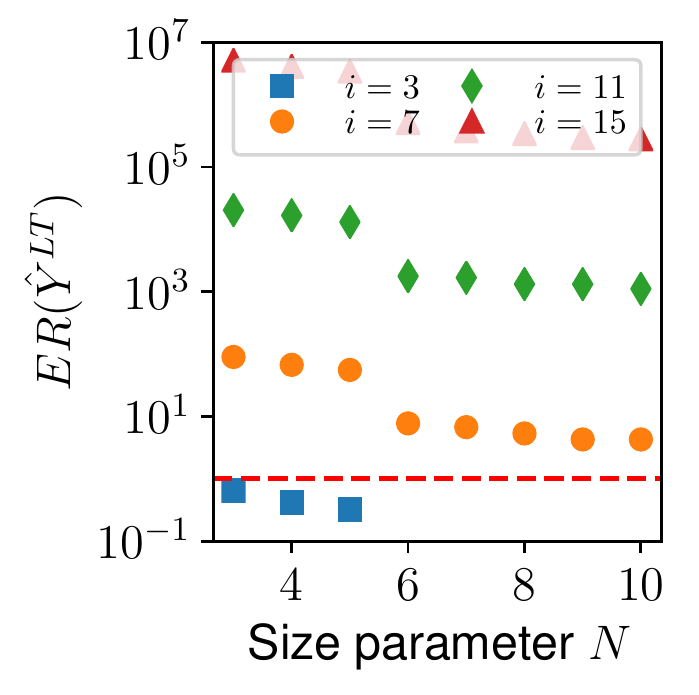}}
	\subfloat[ST]{\includegraphics[width=0.25\textwidth]{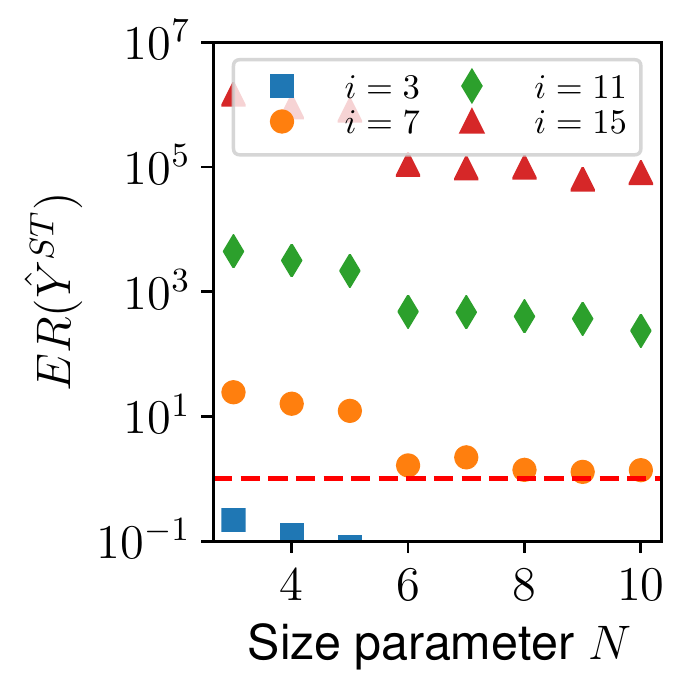}}
	\subfloat[GS]{\includegraphics[width=0.25\textwidth]{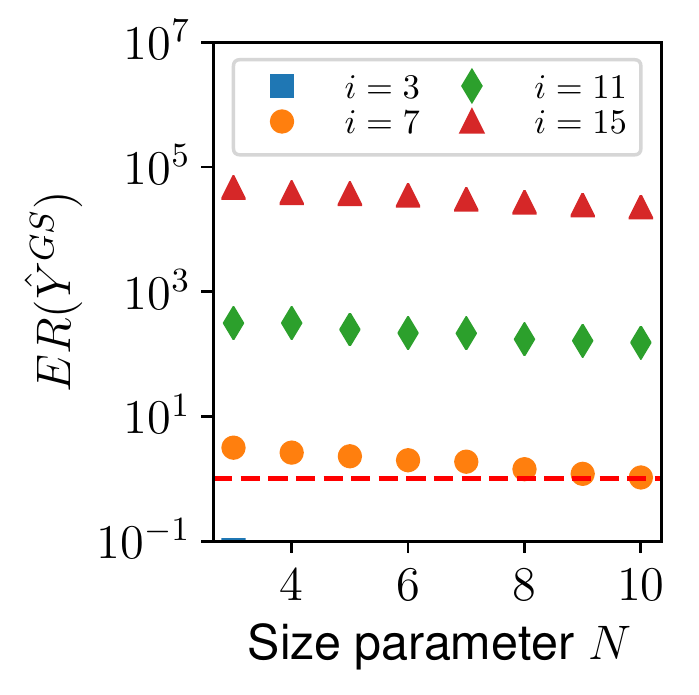}}
	\subfloat[RVR]{\includegraphics[width=0.25\textwidth]{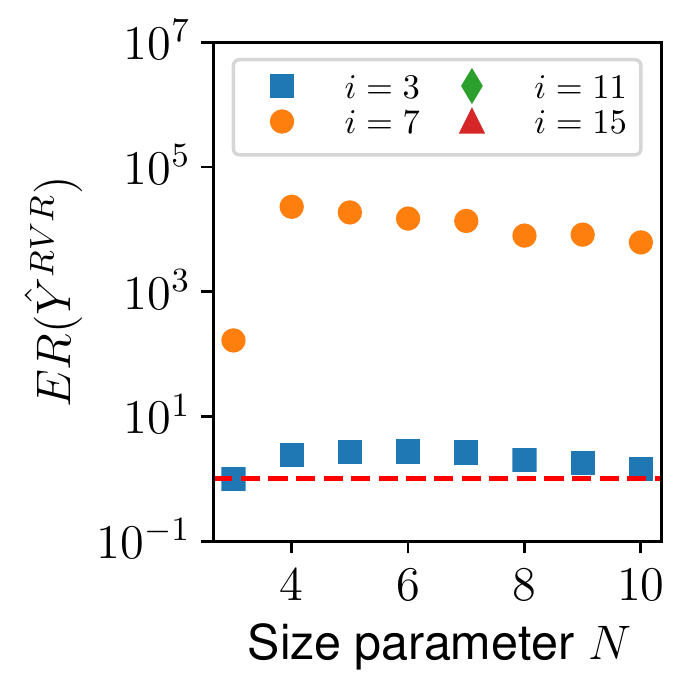}}
	\caption{{\erf} for guarantee-less simulation methods in the {\twoterminal} case.}
	\label{fg:grids_er}
\end{figure}

\subsubsection{Probably approximately correct (PAC) methods}

\noindent Next, we embedded simulation methods in {\aaa}, except CMC which was run using {\gbas} because the latter is optimal for Bernoulli RVs such as $Y^\textit{CMC}$. Figure~\ref{fg:grids_paced} shows the runtime for methods embedded into {\aaa}. We were able to feasible compute PAC-estimates for edge failure probabilities of $2^{-5}\approx0.03$ or larger. The approximation guarantees turned out to be rather conservative, obtaining far better precision in practice. Variance reduction through {\aaa} can only reduce sample size by a factor of $O(1/\epsilon)$ with respect to the Bernoulli case (i.e. $N_{ZO}$), thus PAC-estimates with advanced simulation methods using {\aaa} seem to be confined to cases where ${\unrelf\geq 0.005}$ for the square grids benchmarks. However, conditioned on disruptive events such as natural disasters in which failure probabilities are larger, {\aaa} can deliver practical PAC-estimates.
\begin{figure}
	\centering
	\subfloat[LT]{\includegraphics[width=0.25\textwidth]{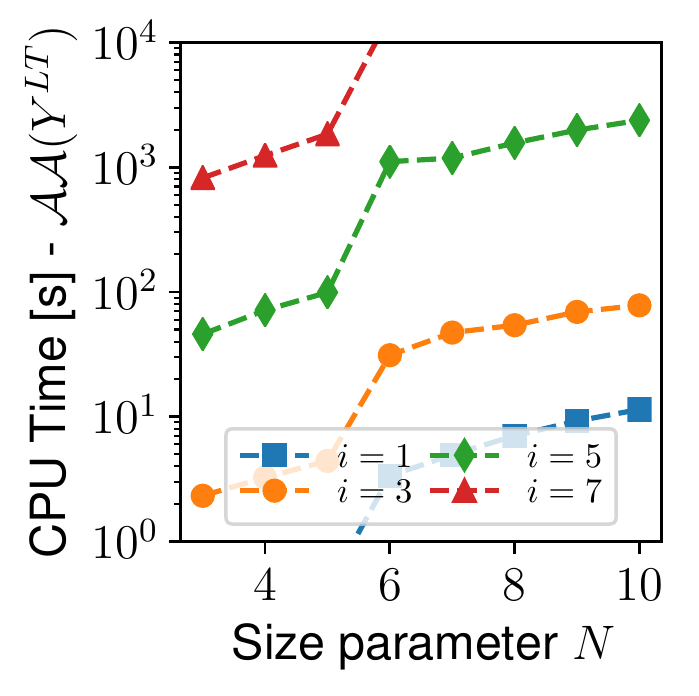}}
	\subfloat[ST]{\includegraphics[width=0.25\textwidth]{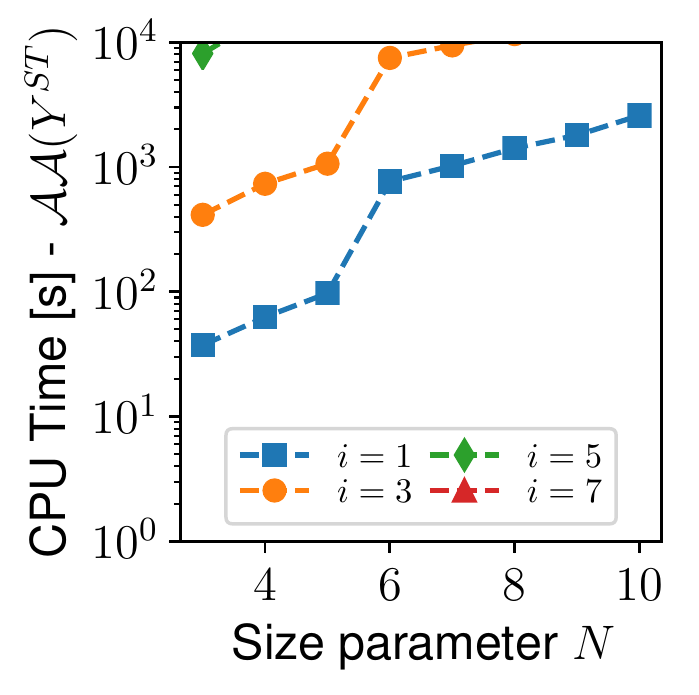}}
	\subfloat[GS]{\includegraphics[width=0.25\textwidth]{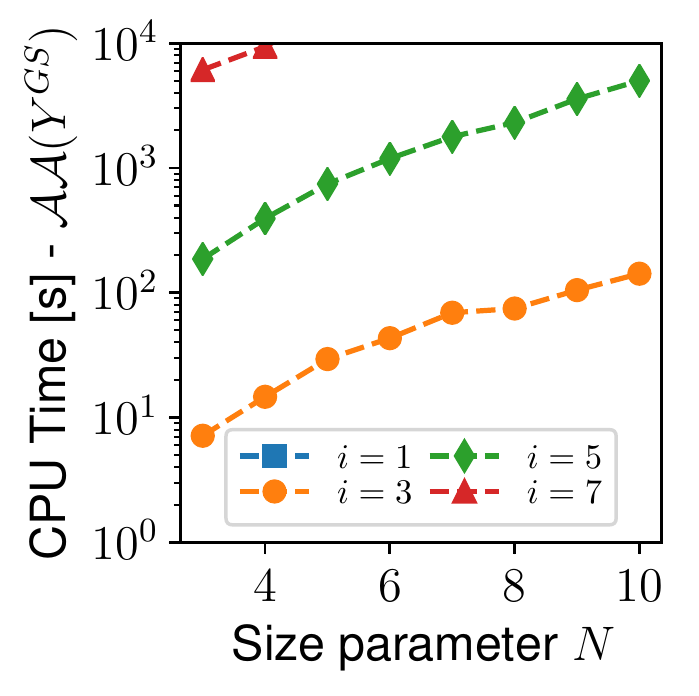}}
	\subfloat[RVR]{\includegraphics[width=0.25\textwidth]{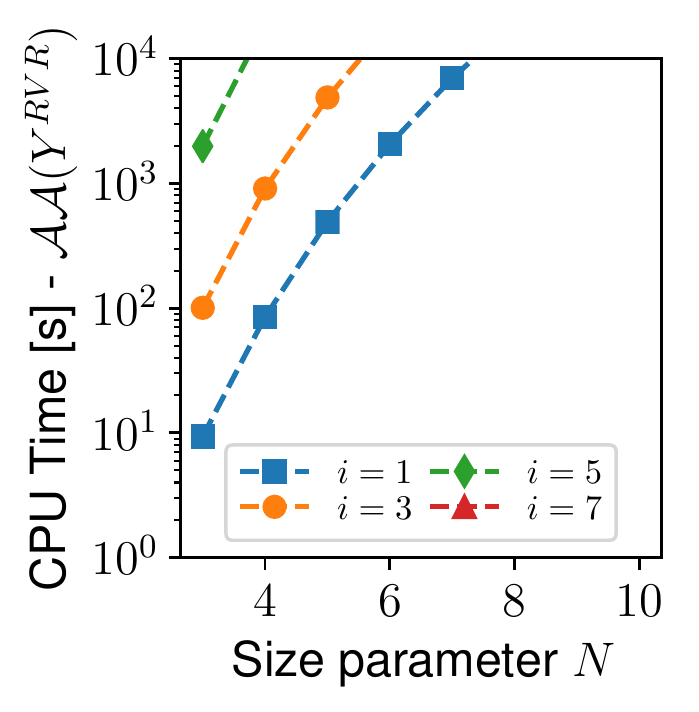}}
	\caption{CPU time for PAC-ized sampling methods via {\aaa} setting $\epsilon=\delta=0.2$ ({\allterminal}).}
	\label{fg:grids_paced}
\end{figure}

On the other hand, ${\gbasf}(Y^{CMC})$ turned out to be practical for more cases, and the analysis used by Huber~\cite{Huber2017} seems to be tight as evidenced by our estimates of {\deltao} (Figure~\ref{fg:cmc}, a-b). Yet, as expected, the running time is heavily penalized by a factor $1/\unrelf$ in the expected sample size as shown in Figure~\ref{fg:cmc}~(c).
\begin{figure}
	\centering
	\subfloat[{\twoterminal}]{\includegraphics[width=0.33\textwidth]{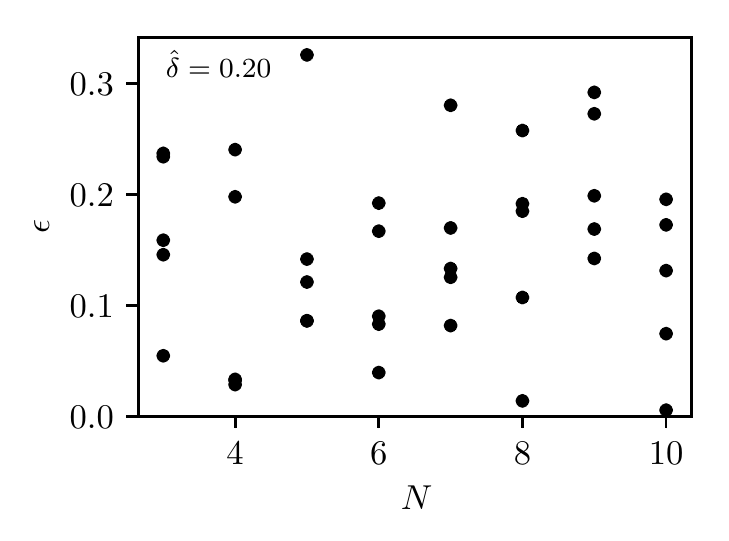}}\hfill
	\subfloat[{{\kterminal}}]{\includegraphics[width=0.33\textwidth]{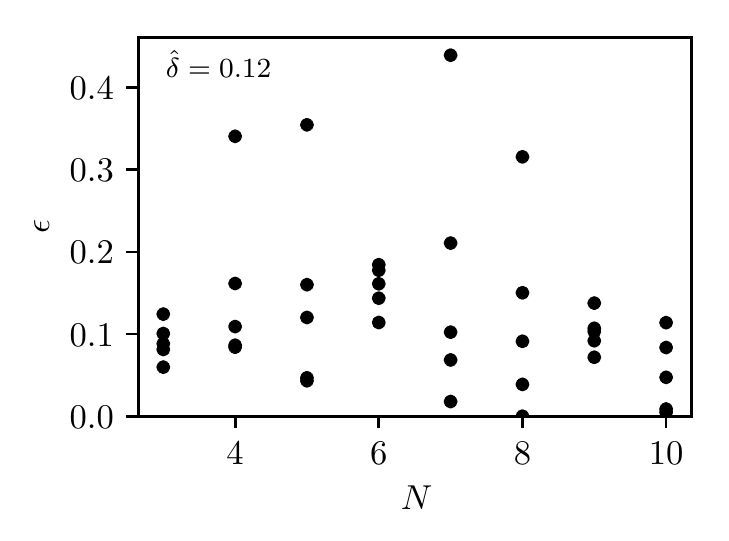}}\hfill
	\subfloat[{{\allterminal}}]{\includegraphics[width=0.33\textwidth]{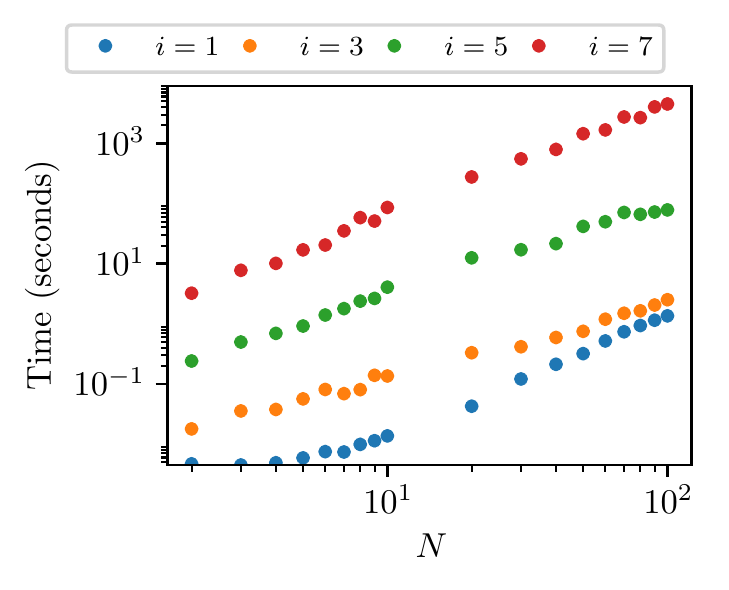}}\hfill
	\caption{ (a-b) Multiplicative error for $\gbasf(Y^{CMC})$ setting $\epsilon=\delta=0.2$, and (c) respective running time for various sizes.}
	\label{fg:cmc}
\end{figure}

Furthermore, we used {\relnet} to approximate {\unrel} in all cases of $\mathcal{K}$ thanks to our new developments. Figure~\ref{fg:grids_pac}(b) shows runtimes as well as $(\deltaof,\epsof)$ values for edge failure probability cases of $2^{-1},2^{-3},2^{-5}$. The weighted to unweighted transformation appears to be the current bottleneck as it considerably increases the number of extra variables in $F_{\mathcal{K}}$. However, note that, unlike K2Simple that is specialized for the all-terminal case, {\relnet} is readily applicable to any {\kterminal} problem instance. Also, {\relnet} is the only method that, due to its dependence on an external Oracle, can exploit on-going third-party developments, as constrained SAT and weighted model counting are very active areas of research.~\footnote{See past and ongoing competitions: \href{https://www.satcompetition.org/}{https://www.satcompetition.org/}} Also, SAT-based methods are uniquely positioned to exploit breakthroughs in quantum hardware and support a possible quantum version of {\relnet}~\cite{Duenas-Osorio2017b}.
\begin{figure}
	\centering
	\subfloat[K2Simple ($\epsilon=0.2$ and $\delta=0.05$)]{
		\begin{minipage}[t][][t]{.50\textwidth}
			\centering
			\includegraphics[width=0.49\textwidth]{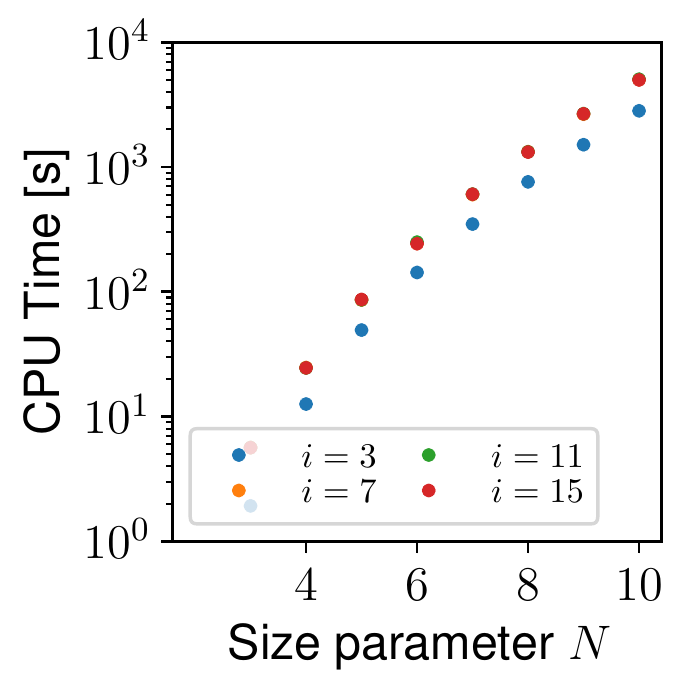}
			\includegraphics[width=0.49\textwidth]{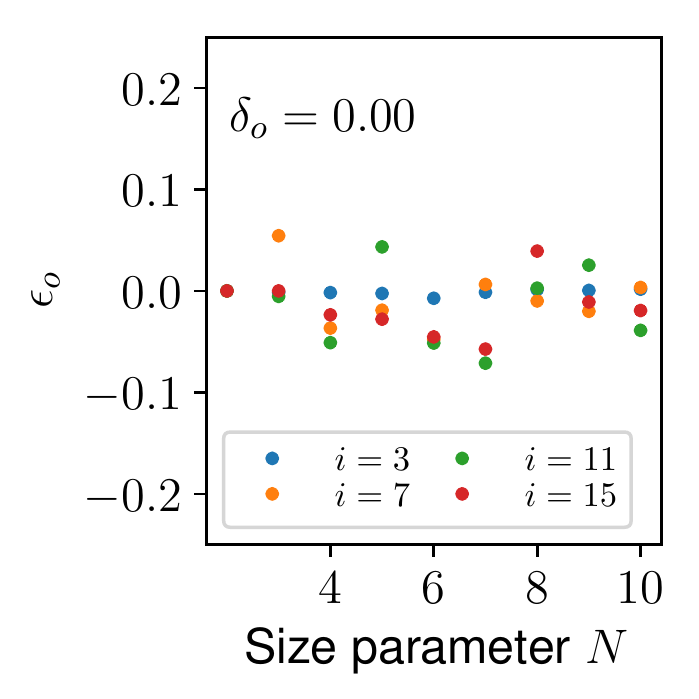}
		\end{minipage}%
	}
	\subfloat[{\relnet} ($\epsilon=0.8$ and $\delta=0.2$)]{
		\begin{minipage}[t][][t]{.50\textwidth}
			\centering
			\includegraphics[width=0.49\textwidth]{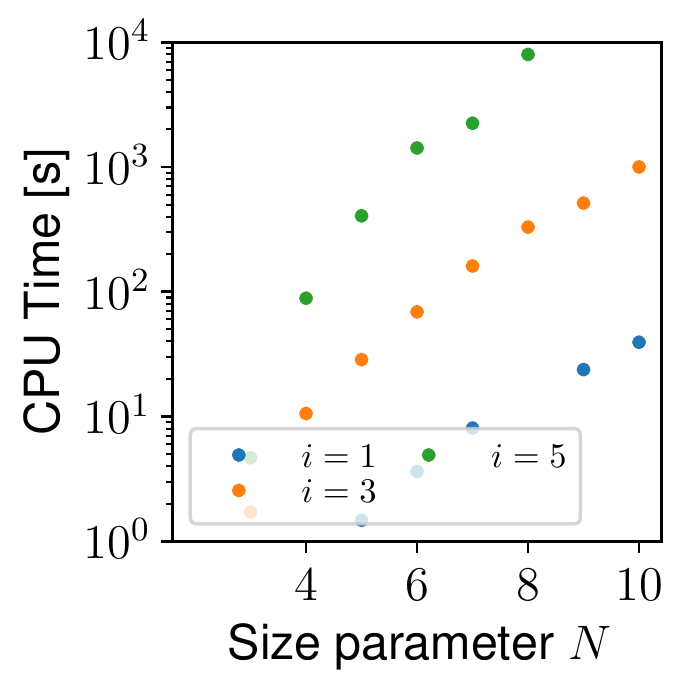}
			\includegraphics[width=0.49\textwidth]{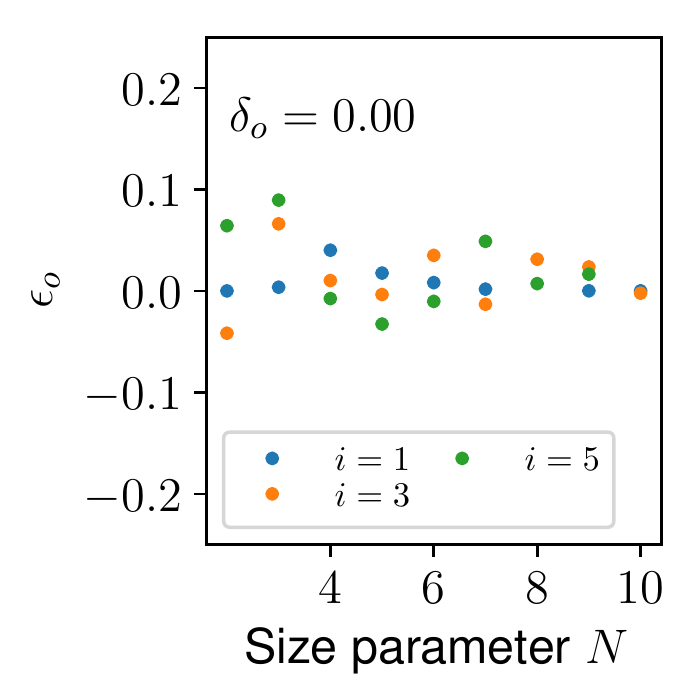}
		\end{minipage}
	}
	\caption{{\relnet} CPU time and {\epso} values for {\kterminal} case.}
	\label{fg:grids_pac}
\end{figure}

Furthermore, our experimental results suggest that the analysis of both, K2Simple and {\relnet}, is not tight. This is observed by values of $(\epsof,\deltaof)$, which are far better than the theoretical input guarantees. This calls for further refinement in their theoretical analysis. Conversely, GBAS delivers practical guarantees that are much closer to the theoretical ones, as demonstrated in Figures~\ref{fg:cmc} and \ref{fg:grids_gbas}, where the target error can be exceeded still satisfying the target confidence overall.
\begin{figure}
	\centering
	\subfloat[All-terminal]{\includegraphics[width=0.25\textwidth]{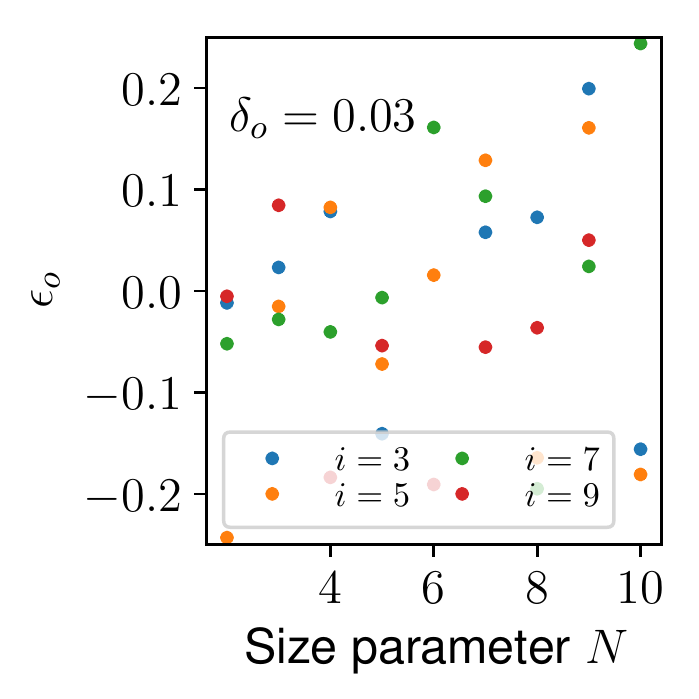}}
	\subfloat[Two-terminal]{\includegraphics[width=0.25\textwidth]{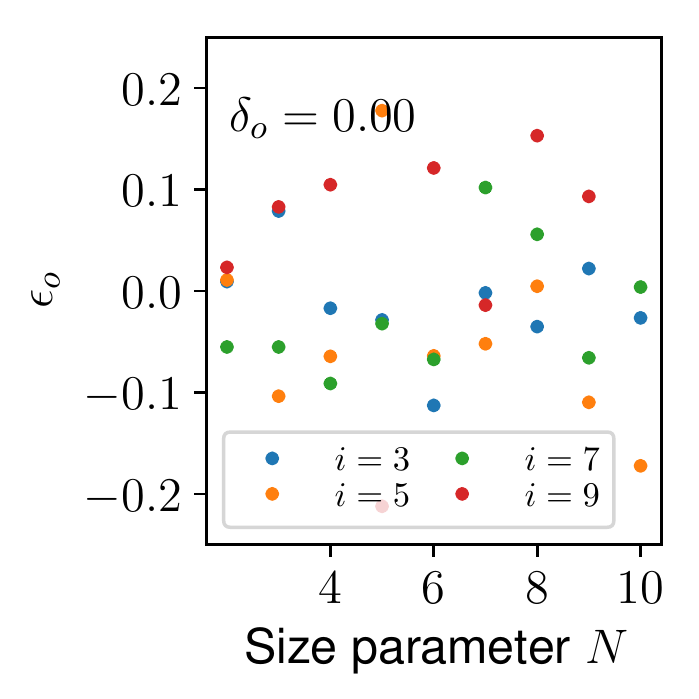}}
	\subfloat[$\mathcal{K}$-terminal]{\includegraphics[width=0.25\textwidth]{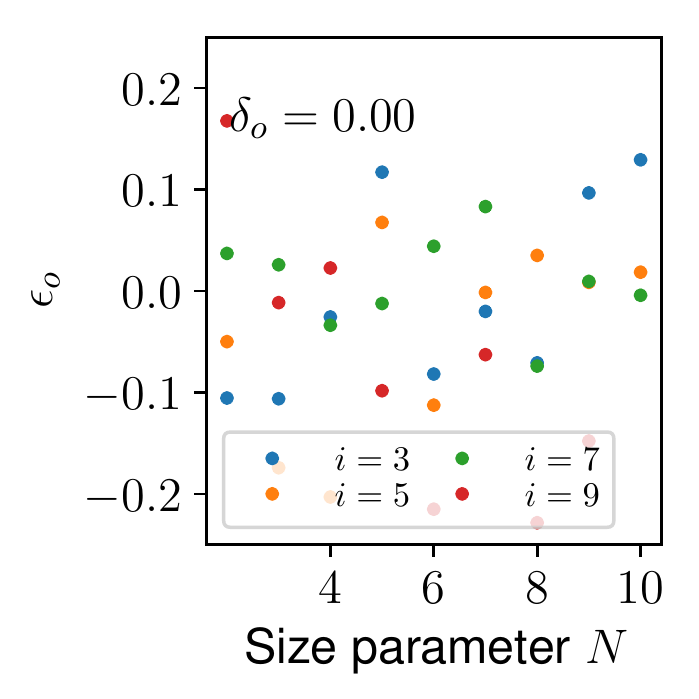}}
	\subfloat[Two-terminal]{\includegraphics[width=0.25\textwidth]{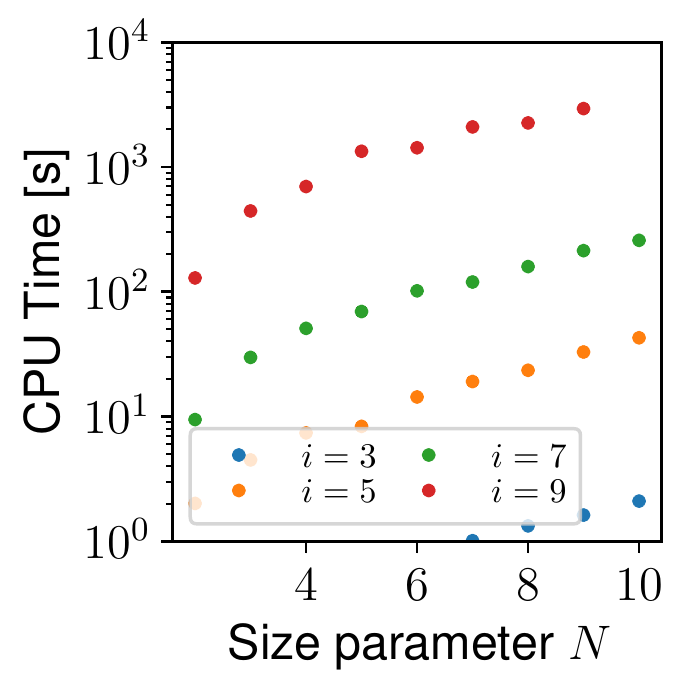}}
	\caption{{\epso} values and CPU time for ${\gbasf(Y^{CMC})}$ setting $(\epsilon,\delta)=(0.2,0.05)$.}
	\label{fg:grids_gbas}
\end{figure}

The square grids gave us insight on the relative performance of reliability estimation methods. Next, we use a dataset of power transmission networks to test methods on instances with engineered system topologies.

\subsection{U.S. Power Transmission Networks}
We consider a dataset with 58 power transmission networks in cities across the U.S. A summary discussion of their structural graph properties can be found elsewhere~\cite{Li201684}. Also, we considered the two-terminal reliability problem. To test the robustness of methods, for each instance {\instance}, we considered every possible $s,t\in V$ pair as a different experiment. Thus, totaling $\binom{n}{2}$ experiments per network instance, where $n=|V|$. We used a single edge failure probability across experiments of $p_e=2^{-3}=0.125$ to keep overall computation time practical. Using {\hll} and preprocessing of networks, we were able to get exact estimates for some of the experiments. We used these to measure the observed multiplicative error {\epso} when possible. Computational times are reported for all experiments, even if multiplicative error is unknown.
\begin{figure}
	\centering
	\subfloat[Multiplicative error {\epso}]{\includegraphics[width=0.5\textwidth]{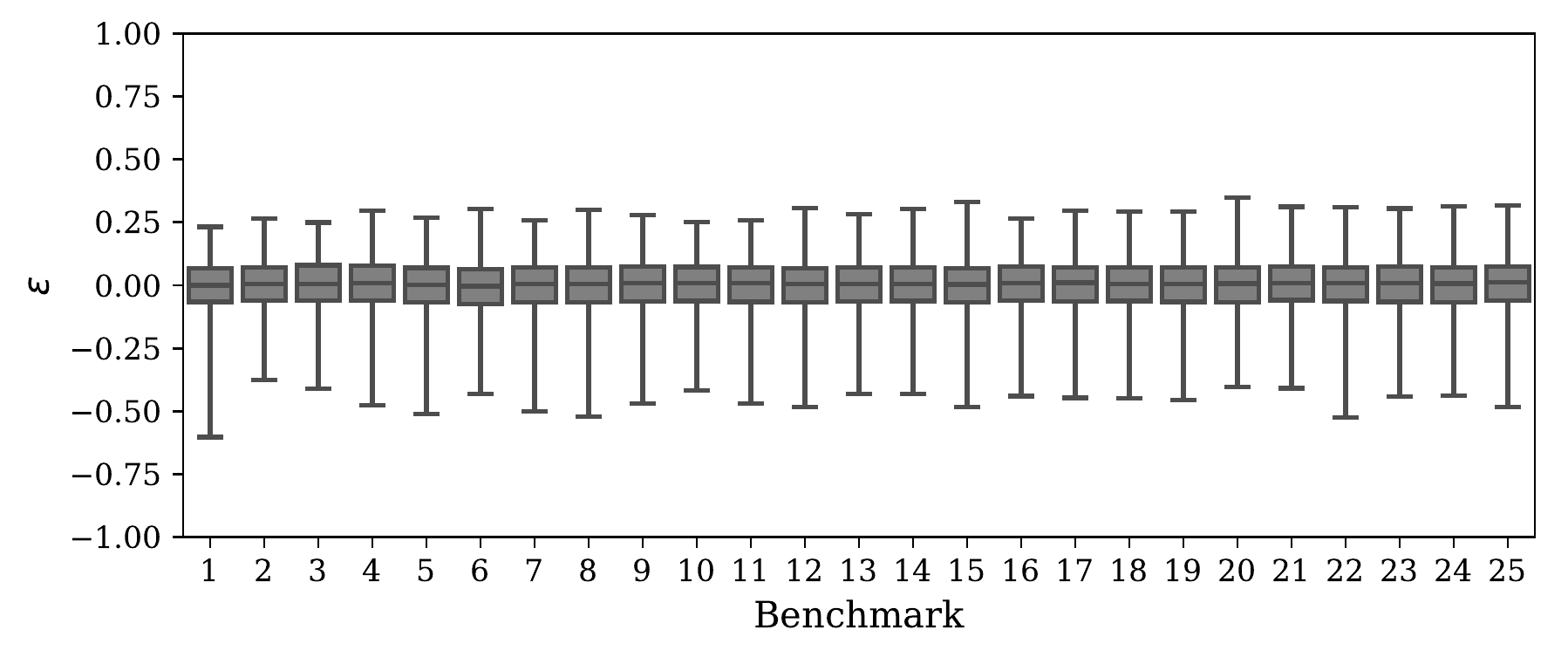}}\hfill
	\subfloat[Running time (seconds)]{\includegraphics[width=0.5\textwidth]{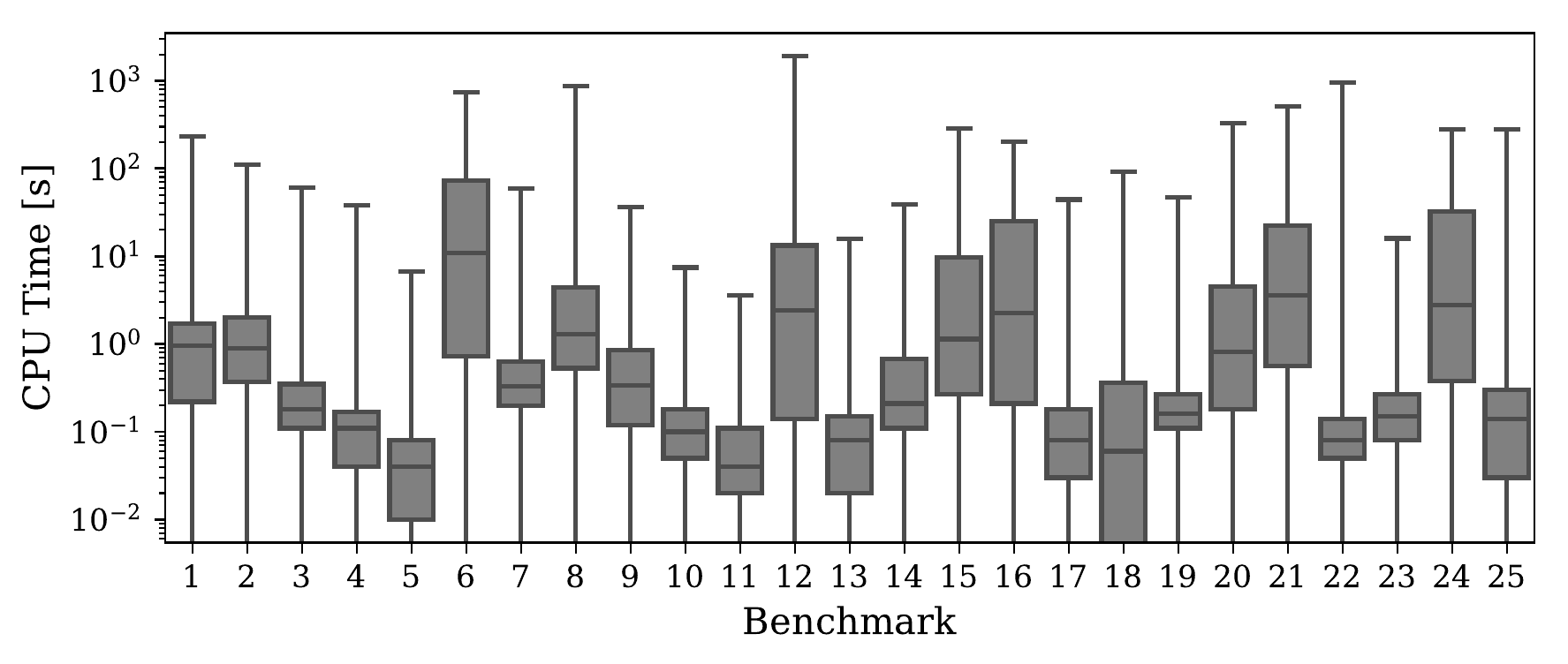}}\hfill
	\caption{ Two-terminal reliability approximations using {\gbas} setting $\epsilon=\delta=0.2$}
	\label{fg:pt_cmc}
\end{figure}

Figure~\ref{fg:pt_cmc} shows PAC-estimates using {\gbas}. As expected, the variation in CPU time was proportional to $1/{\unrelf}$. Furthermore, we  used {\relnet} to obtain PAC-estimates and observed consistent values of the multiplicative error (Figure~\ref{fg:pt_relnet}). In some instances, however, {\relnet} failed to return an estimate before timeout.
We also tested simulation methods setting $N_S=10^3$. Despite the lack of guarantees they performed well in terms of {\epso} and CPU time~(Figures~\ref{fg:pt_sim_epsilon}-\ref{fg:pt_sim_time}, first 5 benchmarks for brevity). However, the efficiency ratio is reduced as the size of instances grows.
\begin{figure}
	\centering
	\subfloat[Multiplicative error {\epso}]{\includegraphics[width=0.5\textwidth]{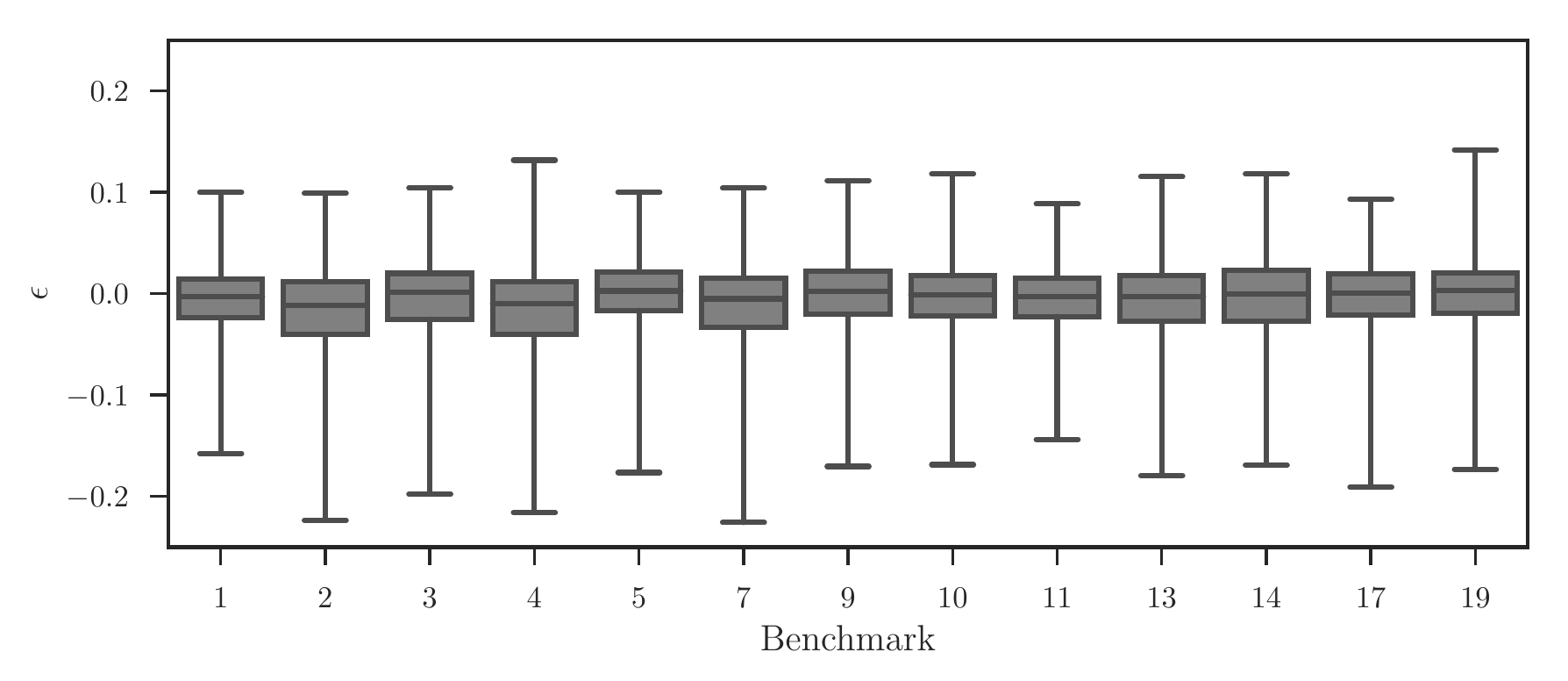}}\hfill
	\subfloat[Running time (seconds)]{\includegraphics[width=0.5\textwidth]{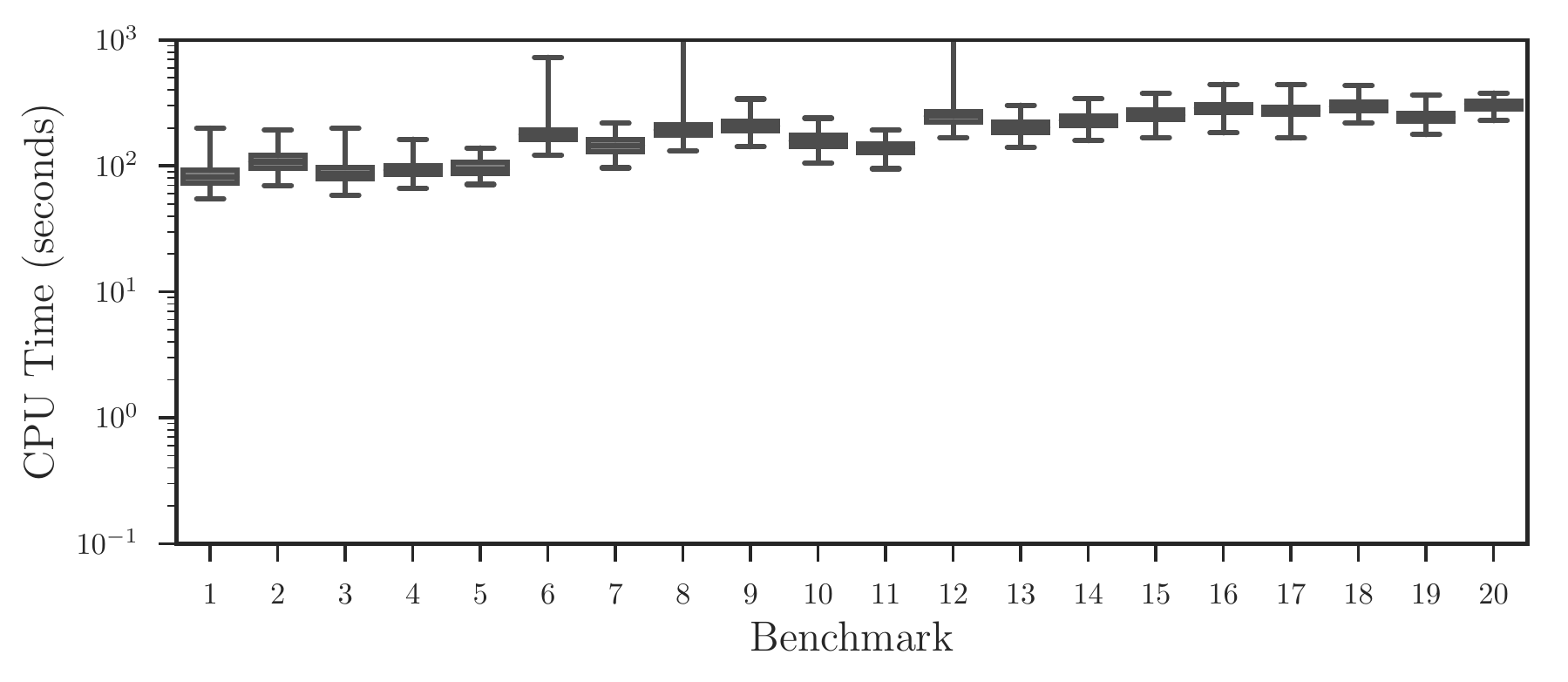}}\hfill
	\caption{Two-terminal reliability approximations using {\relnet} with $(0.8, 0.2)$.}
	\label{fg:pt_relnet}
\end{figure}

\begin{figure}
	\centering
	\subfloat[LT]{\includegraphics[width=0.25\textwidth]{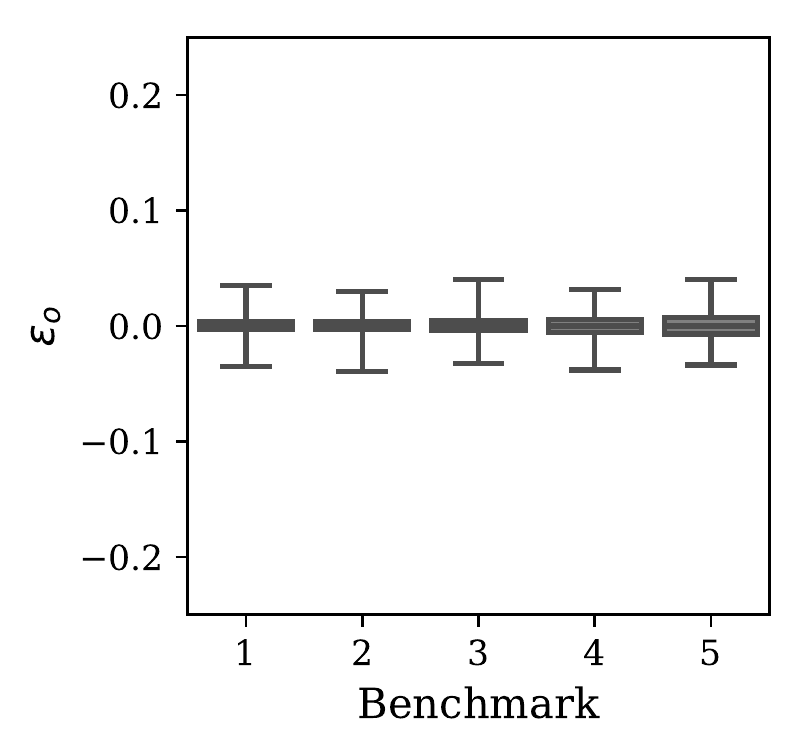}}\hfill
	\subfloat[ST]{\includegraphics[width=0.25\textwidth]{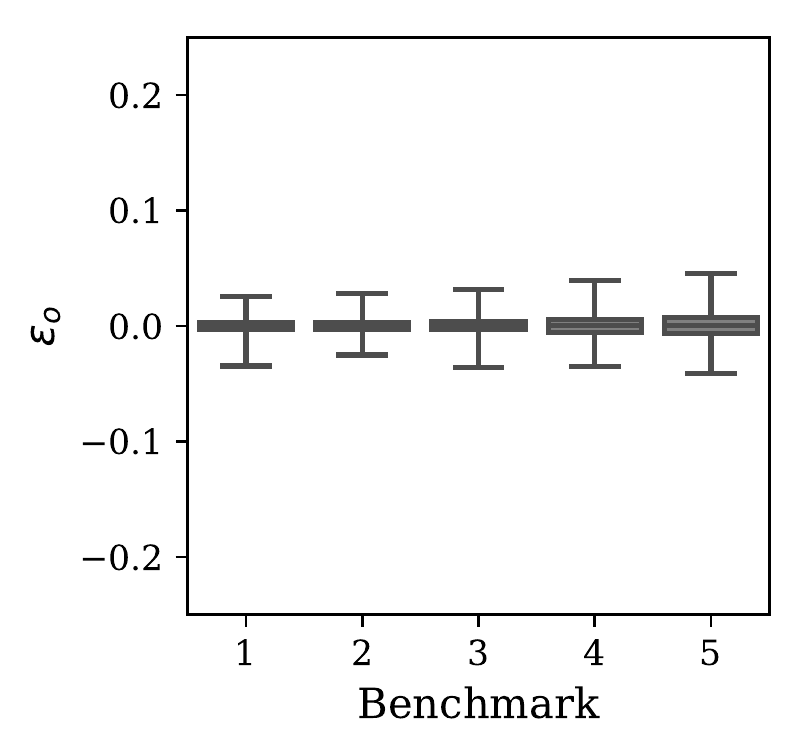}}\hfill
	\subfloat[GS]{\includegraphics[width=0.25\textwidth]{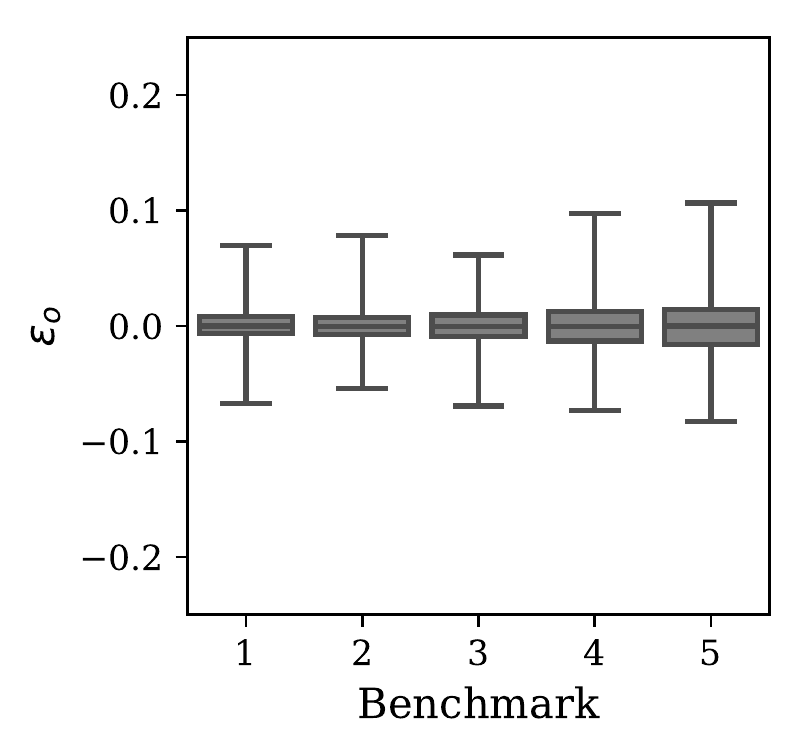}}\hfill
	\subfloat[RVR]{\includegraphics[width=0.25\textwidth]{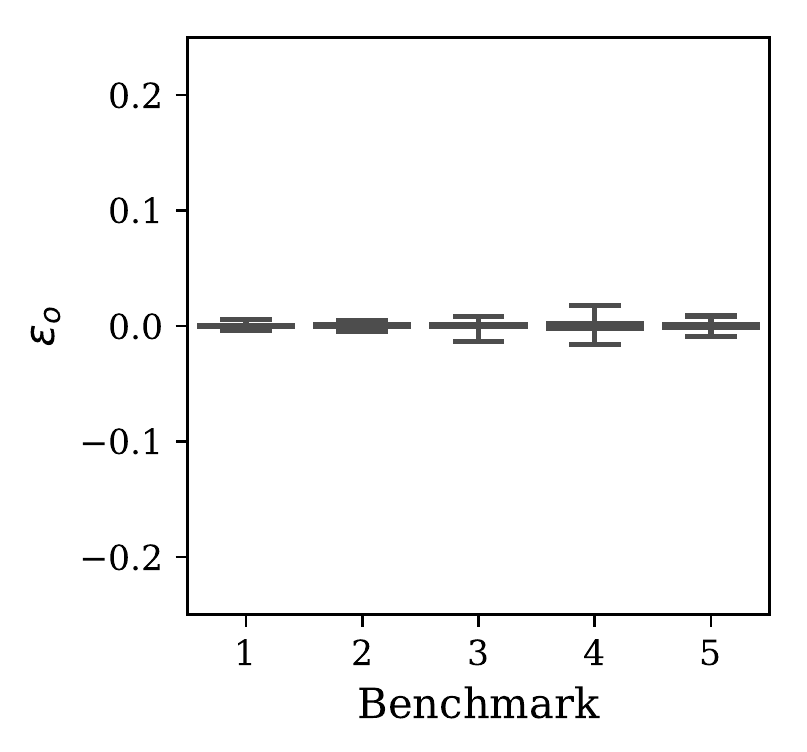}}\hfill
	\caption{Multiplicative error for simulation methods setting $N_S=10^3$.}
	\label{fg:pt_sim_epsilon}
\end{figure}

\begin{figure}
	\centering
	\subfloat[LT]{\includegraphics[width=0.25\textwidth]{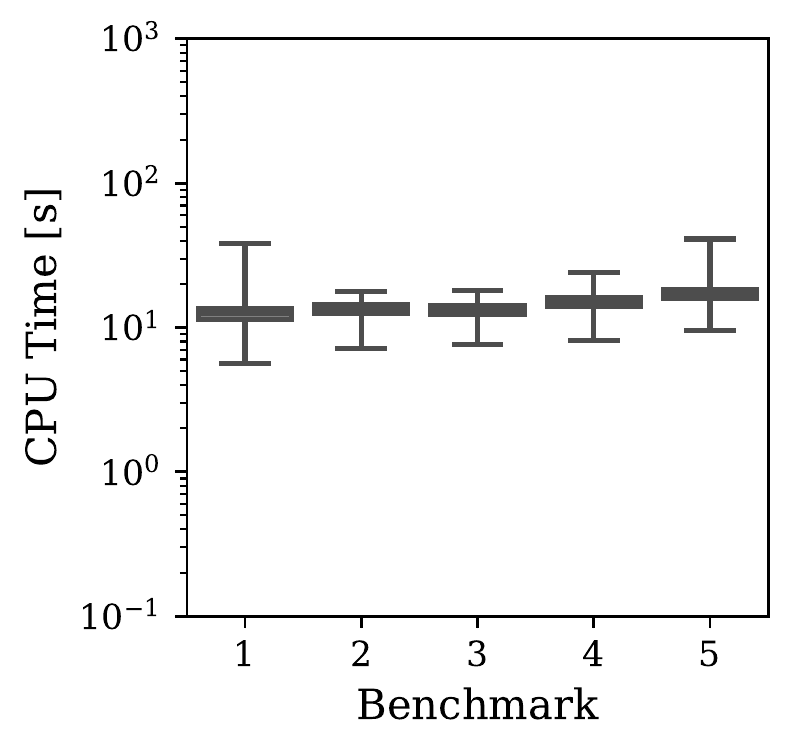}}\hfill
	\subfloat[ST]{\includegraphics[width=0.25\textwidth]{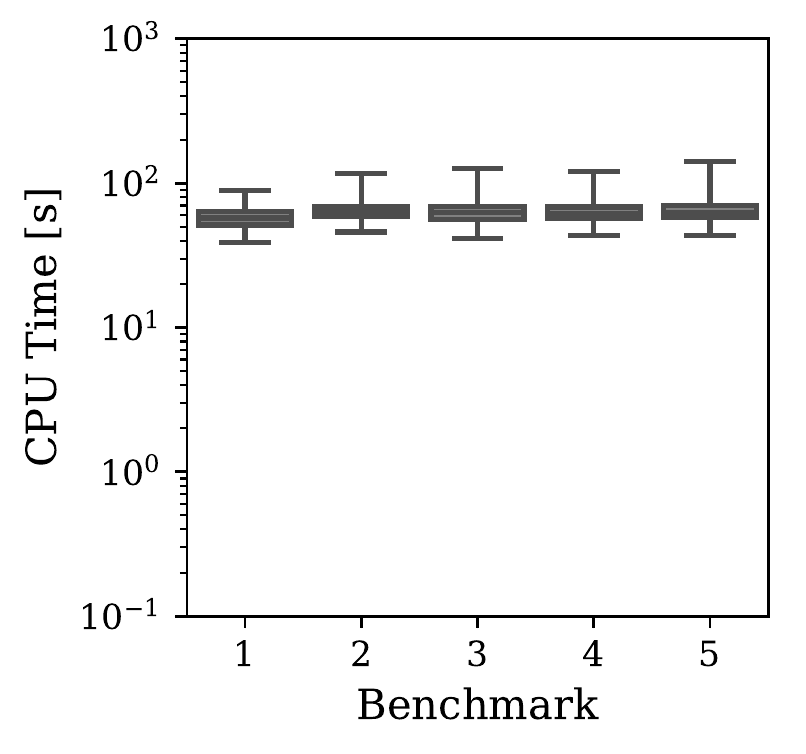}}\hfill
	\subfloat[GS]{\includegraphics[width=0.25\textwidth]{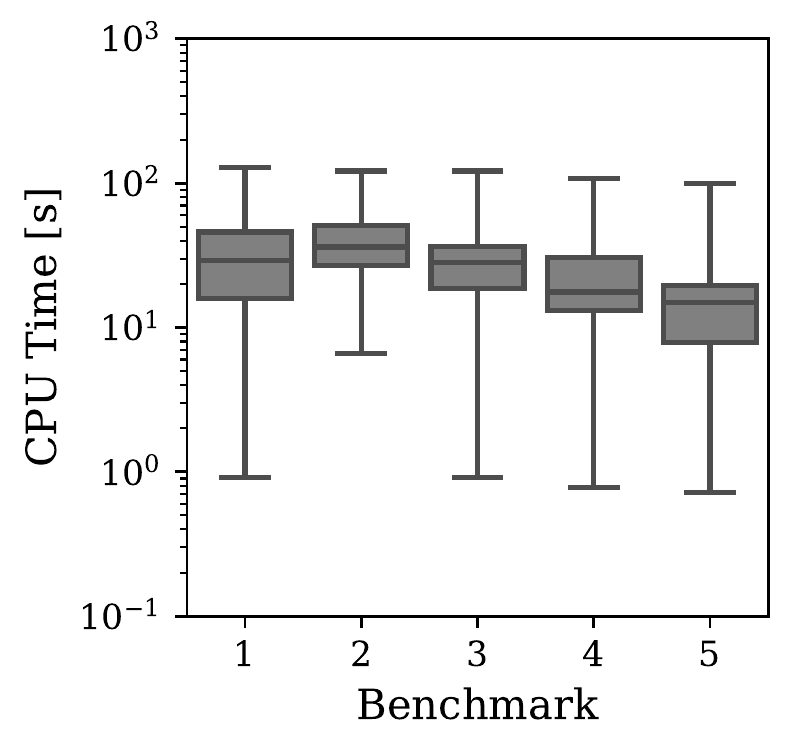}}\hfill
	\subfloat[RVR]{\includegraphics[width=0.25\textwidth]{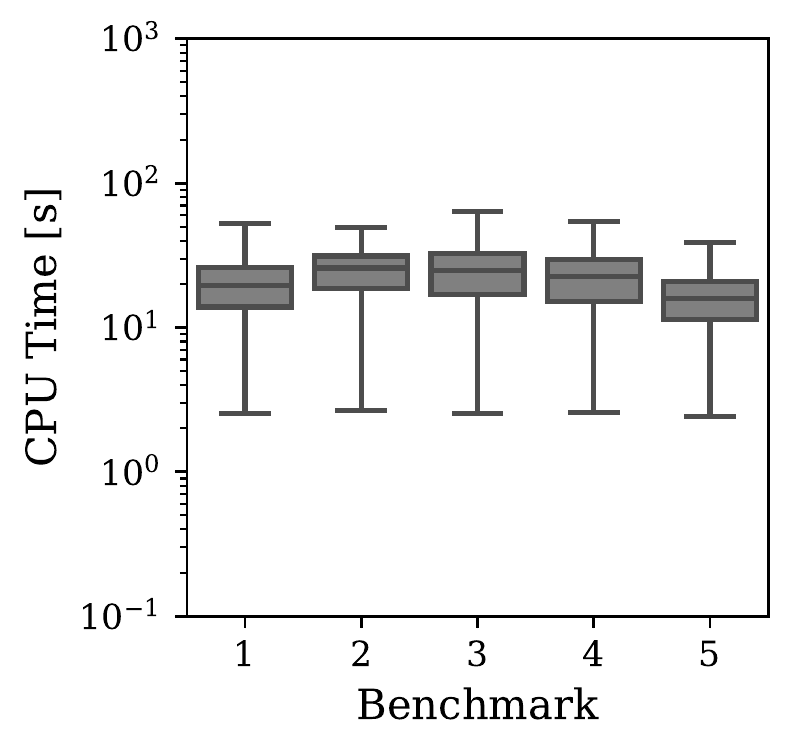}}\hfill
	\caption{Running time for simulation methods setting $N_S=10^3$.}
	\label{fg:pt_sim_time}
\end{figure}

\subsection{Analysis of results and outlook}

Exact methods are advantageous when a topological property is known to be bounded. {\hll} proved useful not only for medium-sized grids ($N\times N=100$), but also it was instrumental when computing exact estimates for many streamlined power transmission networks. Our research shows that methods exploiting bounded properties, together with practical upper bounds, deliver competitive exact calculations for many engineered systems. In power transmission networks, {\hll} was able to exploit their relatively small treewidth.

Among guarantee-less sampling methods, there are multiple paths for improvement. In the cases of LT and ST methods, even when the exponential matrix offers a reliable approach to compute the convolution of exponential random variables, numerically stable computations represent the main bottleneck of the algorithms and in many cases, they are not needed. Thus, future research could devise ways to diagnose these issues and fall-back to the exponential matrix only when needed, or use approximate integration (as in Gertsbakh et al.~\cite{Gertsbakh2015}), or use a more arithmetically robust algorithm (e.g. round-off algorithms for the sample variance~\cite{Chan1983}). Moreover, GS was competitive but its requirement to run a preliminary experiment with an arbitrary number of trajectories $N_0$ to define intermediate levels, and without a formal guidance on its values, can represent a practical barrier when there is no knowledge in the order of magnitude of {\unrel}. Future research could devise splitting mechanisms that use all samples towards the final experiments while retaining its unbiased properties. Finally, RVR was very competitive; however, we noted that (i) the number of terminals adds a considerable overhead in the number of calls to the minimum cut algorithm, and (ii) its performance is tied to the number of maximum probability cuts because larger cuts do not contribute meaningfully towards computing {\unrel}. Future work could use Karger's minimum cut approximation~\cite{Karger1996} and an adaptive truncation of the recursion found in the RVR estimator to address (i) and (ii), respectively. We are currently investigating this very issue and recognized the RVR estimator as an special, yet randomized, case of state-space-partition algorithms~\cite{Paredes2018}.

Among PAC-methods, we found {\gbas} to be tight in its theoretical analysis and competitive in practice. Outside the extremely rare-event regime, we contend that the usage of PAC algorithms such as GBAS would benefit the reliability and system safety community as they give exact confidence intervals without the need of asymptotic assumptions and arbitrary choices on the number of samples and replications. Karger's newly suggested algorithms demonstrated practical performance even in the rare-event regime, yet it appears that their theoretic guarantees are still too conservative. Equipping K2Simple with {\gbas} at the first recursion level would instantly yield a faster algorithm for non-small failure probabilities. However, the challenge of proving tighter bounds on the relative variance for the case of small failure probabilities remains. The same argument on theoretic guarantees being too conservative extends to RelNet, which cannot be set too tight in practice. But we expect RelNet to gain additional competitiveness as orthogonal advances in approximate weighted model counting continue to accrue. RelNet remains competitive in the non rare-event regime, delivering rigorous PAC-guarantees for the {\kterminal} problem. Also, its SAT-based formulation makes it uniquely suitable for quantum algorithmic developments, at a time when major technological developers, such as IBM, Google, Intel, etc., are increasing their investment on quantum hardware~\cite{ preskill2018quantum}.

\section{Conclusions and Future Work}
\noindent We introduced a new logic-based method for the {\kterminal} problem, {\relnet}, which offers rigorous guarantees on the quality of its approximations. We examined this method relative to several other competitive approaches. For non-exact methods we emphasized desired relative variance properties: bounded by a polynomial on the size of the input instance (FPRAS), bounded by a constant (BRV), or tending to zero (VRV). We turned popular estimators in the literature into probably approximately correct (PAC) ones by embedding them into an optimal Monte Carlo algorithm, and showed their practical performance using a set of benchmarks. 

Our tool, {\relnet}, is the first approximation of the {\kterminal} problem, giving strong performance guarantees in the FPRAS sense (relative to a SAT-oracle). Also, {\relnet} gives rigorous multiplicative error guarantees, which are more conservative than relative error guarantees. However, its performance in practice remains constrained to not too small edge failure probabilities ($\approx0.1$), which remains practical when conditioned on catastrophic hazard events. Thus, our future work will pursue a more efficient encoding and solution approaches, especially when edge failure probabilities become smaller. Moreover, promising advances in approximate model counting and SAT solvers will render {\relnet} more efficient over time, given its reliance on SAT oracles.

Embedding estimators with desired relative variance properties into PAC methods proved to be an effective strategy in practice, but only when failure probabilities are not rare.
Despite this relative success, the strategy becomes impractical when {\unrel} approaches zero. Thus, future research can address these issues in two fronts: (i) establishing parameterized upper bounds on the relative variance of new and previous estimators when they exist, and (ii) develop new PAC-methods with faster convergence guarantees than those of the canonical Monte Carlo approach.

PAC-estimation is a promising yet developing approach to system reliability estimation. Beyond the {\kterminal} problem, its application can be challenging in the rare event regime, but in all other cases it can be used much more frequently as an alternative to the less rigorous---albeit pervasive---empirical study of the variance through replications and asymptotic assumptions appealing to the central limit theorem. In fact, methods such as {\gbas} deliver exact confidence intervals using all samples at the user's disposal. In future work, the authors will explore general purpose PAC-methods that can be employed in the rare-event regime, developing a unified framework to conduct reliability assessments with improved knowledge of uncertainties and further promote engineering resilience and align it with the measurement sciences.



\section*{Acknowledgments}
\noindent The authors gratefully acknowledge the support by the U.S. Department of Defense (Grant W911NF-13-1-
0340) and the U.S. National Science Foundation (Grants CMMI-1436845 and CMMI-1541033).

\biboptions{numbers,sort&compress}
\bibliographystyle{elsarticle-num}
\bibliography{biblist}


\newpage
\appendix
\section*{Proof of Theorem~\ref{theom:markov}}
The next two lemmas are useful towards proving Theorem~\ref{theom:markov}. Lemma~\ref{lemma:meanvar} shows how the variance in a Monte Carlo estimator reduces as a function of the number of samples.
\begin{lemma}
	\label{lemma:meanvar}
	For $Y$ a random variable with mean $\mu_Y$ and variance $\sigma^2_Y$, define $Y_n$ as follows:
	\begin{equation*}
		Y_n = \frac{1}{n}\sum_{i=1}^{n} Y_i,
	\end{equation*}
	with $Y_i\sim Y$ and $n\in \mathbb{N}$. Then, we have that $\sigma_{Y_n}^2 = \frac{1}{n} \sigma^2_Y$.
\end{lemma}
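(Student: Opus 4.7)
The plan is to derive the identity directly from two elementary properties of variance: (i) the scaling property $\operatorname{Var}(aX)=a^2\operatorname{Var}(X)$ for any scalar $a$, and (ii) additivity of variance over independent summands, $\operatorname{Var}(\sum_i X_i)=\sum_i \operatorname{Var}(X_i)$. Both are standard and I would invoke them without re-derivation. The i.i.d.\ hypothesis $Y_i\sim Y$ is understood in the stronger sense of mutually independent copies of $Y$, which is the usual reading in the Monte Carlo context of the paper; I would make this independence assumption explicit at the outset so that property (ii) applies.

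First I would pull the factor $1/n$ out of the variance using (i), obtaining $\sigma_{Y_n}^2=\frac{1}{n^2}\operatorname{Var}\!\bigl(\sum_{i=1}^n Y_i\bigr)$. Then I would apply (ii) to split the variance of the sum into $\sum_{i=1}^n \operatorname{Var}(Y_i)$. Because each $Y_i$ has the same distribution as $Y$, every term equals $\sigma_Y^2$, so the sum is $n\sigma_Y^2$. Combining these two steps gives $\sigma_{Y_n}^2=\frac{1}{n^2}\cdot n\sigma_Y^2=\sigma_Y^2/n$, which is the desired identity.

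There is no real obstacle here; the only subtlety worth flagging is the need for independence (pairwise independence in fact suffices, since only the vanishing of cross-covariance terms is used). Since this independence is the default for i.i.d.\ samples driving the Monte Carlo schemes discussed earlier in the paper, the proof will be a short three-line calculation rather than anything requiring machinery. I would present it as a single display equation chaining the three manipulations (scaling, additivity under independence, identical distribution) and conclude.
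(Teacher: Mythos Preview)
Your proposal is correct. The paper's proof reaches the same conclusion but takes a slightly more explicit route: rather than invoking the scaling and additivity properties of variance as known facts, it expands $\sigma_{Y_n}^2 = \Expect[Y_n^2]-\mu_Y^2$ directly, writes $\Expect[Y_n^2]=\frac{1}{n^2}\sum_{i,j}\Expect[Y_iY_j]$, splits into diagonal and off-diagonal terms, and uses independence to evaluate the cross terms as $\mu_Y^2$. This is effectively an inline re-derivation of the very additivity property you cite. Your version is shorter and arguably cleaner for a reader who already accepts (i) and (ii); the paper's version is more self-contained. Either is perfectly adequate for this elementary lemma, and your remark that pairwise independence suffices is a nice sharpening that the paper's computation also supports (only the off-diagonal $\Expect[Y_iY_j]$ terms are touched).
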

\begin{proof}
	The variance of $Y_n$ is:
	\begin{align*}
		\sigma^2_{Y_n} &= \Expect[(Y_n - \mu_{Y_n})^2]\\
		&= \Expect[Y_n^2] - \mu_{Y_n}^2.
	\end{align*}
	Note that $\mu_{Y_n} = \mu_Y$, and using the property of linearity in the expectation operator write:
	\begin{align*}
	\sigma^2_{Y_n} &= \frac{1}{n^2}\sum_{i,j}^{n} \Expect\big[Y_i Y_j\big]- \mu_{Y}^2\\
	&= \frac{1}{n^2}\bigg(\sum_{i}^{n} \Expect\big[Y_i^2\big] + \sum_{i,j:i\neq j}^{n} \Expect\big[Y_i Y_j\big]\bigg) - \mu_{Y}^2.
	\end{align*}
	Now, recall that every $Y_i$ is i.i.d as $Y$ and $\sigma^2_Y=\Expect[Y^2]-\mu_Y^2$. Then,
	\begin{align*}
	\sigma^2_{Y_n} &= \frac{1}{n^2}\bigg(n\cdot\Expect[Y^2] + n(n-1)\cdot\mu_Y^2\bigg) - \mu_{Y}^2\\  
	&=\frac{1}{n} \bigg( \Expect[Y^2] + (n-1)\cdot\mu_{Y}^2 - n\cdot \mu_{Y}^2 \bigg)\\
	&=\frac{1}{n} \big(\Expect[Y^2] - \mu_{Y}^2)\\
	&= \frac{1}{n} \sigma^2_Y.
	\end{align*}
\end{proof}
Lemma~\ref{lemma:boost_confidence} shows the link between the number of repetitions of a experiment and the success probability of the majority of repetitions. We will use this argument for constructing a median-based estimate.
\begin{lemma}
\label{lemma:boost_confidence}
Let $X$ be a Bernoulli random variable with success probability $s\in[0,1]$. Define the random variable:
\begin{equation*}
	X_r = \sum_{i=1}^{r} X_i,
\end{equation*}
with $r\in \mathbb{N}$. Then, the probability of at most $\lfloor r/2 \rfloor$ successes is:
\begin{equation*}
	\Pr(X_r \leq r/2) = \sum_{i=0}^{\lfloor r/2 \rfloor} \binom{r}{i} (s)^{i} (1-s)^{r-i}
\end{equation*}
\end{lemma}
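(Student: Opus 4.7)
The plan is to recognize that $X_r$, being a sum of $r$ i.i.d.\ Bernoulli$(s)$ random variables, is distributed as a Binomial$(r,s)$ random variable, and then evaluate its cumulative distribution function at $\lfloor r/2\rfloor$. The argument is elementary; no delicate estimate is involved, so the only care needed is bookkeeping with the floor function.

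First I would establish the pointwise probability $\Pr(X_r = i)$ for $i \in \{0,1,\dots,r\}$. Because the $X_i$ are i.i.d.\ Bernoulli$(s)$, any specific outcome sequence with exactly $i$ ones and $r-i$ zeros has probability $s^i(1-s)^{r-i}$ by independence. The number of such sequences is $\binom{r}{i}$, since we are choosing which of the $r$ trials produce the successes. Hence $\Pr(X_r=i)=\binom{r}{i}s^i(1-s)^{r-i}$, i.e., $X_r \sim \text{Binomial}(r,s)$.

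Next I would sum over the allowed range. The event $\{X_r \leq r/2\}$ is equivalent to $\{X_r \leq \lfloor r/2\rfloor\}$ because $X_r$ takes integer values, and for odd $r$ there is no integer strictly equal to $r/2$. By the (finite) additivity of probability over the disjoint events $\{X_r = i\}$ for $i = 0, 1, \dots, \lfloor r/2\rfloor$, we get
\begin{equation*}
\Pr(X_r \leq r/2) \;=\; \sum_{i=0}^{\lfloor r/2\rfloor} \Pr(X_r = i) \;=\; \sum_{i=0}^{\lfloor r/2\rfloor} \binom{r}{i} s^{i} (1-s)^{r-i},
\end{equation*}
which is exactly the stated identity.

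Since both steps are standard, there is no genuine obstacle; the only subtlety worth flagging is the integer-valued nature of $X_r$, which is what lets us replace the (possibly non-integer) threshold $r/2$ with $\lfloor r/2\rfloor$ without changing the probability. The plan can therefore be executed in a couple of lines, after which Lemma~\ref{lemma:boost_confidence} is available for use in the median-boosting construction that follows in the appendix.
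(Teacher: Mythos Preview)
Your proposal is correct and follows essentially the same approach as the paper: recognize that $X_r$ is a Binomial$(r,s)$ random variable and evaluate its cumulative distribution function at $\lfloor r/2\rfloor$. The paper's own proof is a two-sentence remark to this effect, so your slightly more detailed write-up (including the floor/integer-value observation) is, if anything, a bit more careful than necessary.
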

\begin{proof}
The proof is straightforward if one realizes that $X_r$ is a Binomial random variable with parameters $s$ and $r$. The desired probability is the cumulative distribution function evaluated at $r/2$.
\end{proof}

Next, we are ready to prove Theorem~\ref{theom:markov}.
\begin{theorem}
\label{theom:markov}
For a random variable $Y$ with mean $\mu_Y$ and variance $\sigma^2_{Y}$, and user specified parameters $\epsilon,\delta\in(0,1)$, it suffices to draw $O(\sigma^2_{Y}/\mu_Y^2 \epsilon^{-2}\log1/\delta)$ i.i.d samples to compute an estimate $\overline{\mu}_Y$ such that:
\begin{equation*}
	\Pr\bigg(\frac{|\overline{\mu}_Y-\mu_Y|}{\mu_Y}\geq \epsilon\bigg) \leq \delta
\end{equation*}
\end{theorem}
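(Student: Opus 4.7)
The plan is to combine the two appendix lemmas via the standard "median of means" boosting trick. First I would use Lemma~\ref{lemma:meanvar} together with Chebyshev's inequality to produce a weak but constant-confidence estimator, and then amplify confidence from a constant to $1-\delta$ by taking medians of independent copies, applying Lemma~\ref{lemma:boost_confidence} (a binomial tail bound) to control the failure probability.

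More concretely, let $Y_n = (1/n)\sum_{i=1}^n Y_i$. By Lemma~\ref{lemma:meanvar}, $\sigma^2_{Y_n} = \sigma_Y^2/n$, and Chebyshev's inequality gives
$$
\Pr\bigl(|Y_n - \mu_Y| \geq \epsilon \mu_Y\bigr) \;\leq\; \frac{\sigma_Y^2}{n\, \epsilon^2 \mu_Y^2}.
$$
Choosing $n = \lceil 4\sigma_Y^2/(\mu_Y^2 \epsilon^2)\rceil$ makes the right-hand side at most $1/4$, so a single run of $n$ samples gives a weak estimator that is $\epsilon$-accurate with probability at least $3/4$. This is the first step and only uses Chebyshev plus Lemma~\ref{lemma:meanvar}.

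Next I would generate $r$ independent such estimates $\hat{\mu}^{(1)}, \ldots, \hat{\mu}^{(r)}$ (for $r$ odd) and output their median $\overline{\mu}_Y$. If the median deviates from $\mu_Y$ by more than $\epsilon \mu_Y$, then at least $\lceil r/2 \rceil$ of the individual estimates must have so deviated; call each such deviation a ``success'' of a Bernoulli trial with success probability $s \leq 1/4$. By Lemma~\ref{lemma:boost_confidence} applied to the random variable $X_r$ counting the number of failing estimates,
$$
\Pr\bigl(|\overline{\mu}_Y - \mu_Y|\geq \epsilon \mu_Y\bigr) \;\leq\; \Pr(X_r \geq r/2) \;=\; 1 - \sum_{i=0}^{\lfloor r/2 \rfloor} \binom{r}{i} s^i (1-s)^{r-i}.
$$
Using standard Chernoff/Hoeffding bounds on this binomial tail with $s \leq 1/4$ (so that the median-threshold $1/2$ lies strictly above the mean $rs \leq r/4$), the tail decays as $e^{-c r}$ for an absolute constant $c>0$ (e.g.\ $c = 2(1/2 - 1/4)^2 = 1/8$ by Hoeffding). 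Hence choosing $r = \lceil (1/c)\ln(1/\delta)\rceil = O(\log 1/\delta)$ guarantees failure probability at most $\delta$.

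The total number of samples drawn is
$$
N \;=\; n \cdot r \;=\; O\!\left(\frac{\sigma_Y^2}{\mu_Y^2\,\epsilon^2}\,\log\frac{1}{\delta}\right),
$$
as claimed. The one non-routine step is bounding the binomial tail in Lemma~\ref{lemma:boost_confidence}: as stated, the lemma only writes the CDF, so I would need to invoke (or derive) a Chernoff-type bound to extract the exponential-in-$r$ decay needed for the logarithmic sample-complexity dependence on $\delta$. This is the part that deserves the most care; everything else is a direct application of Chebyshev and linearity of expectation.
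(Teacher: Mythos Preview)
Your proposal is correct and follows essentially the same median-of-means strategy as the paper: Chebyshev via Lemma~\ref{lemma:meanvar} to get a constant-confidence estimator with $n=O(\sigma_Y^2/\mu_Y^2\epsilon^{-2})$ samples, then median of $r=O(\log 1/\delta)$ independent copies. The only cosmetic difference is in bounding the binomial tail: the paper uses the elementary inequality $s^i(1-s)^{r-i}\le (s(1-s))^{r/2}$ (valid for $i\le r/2$, $s>1/2$) together with $\sum_i\binom{r}{i}\le 2^r$ to get $(4s(1-s))^{r/2}$ directly, whereas you invoke Hoeffding; both yield the same exponential decay and the same final sample count.
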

\begin{proof}
	From the well known Markov (or Chebyshev) inequality, we can write:
	\begin{equation*}
		\Pr\big(|Y-\mu_Y|\geq k\big) \leq \frac{\sigma^2_{Y}}{k^2}.
	\end{equation*}
	For our purposes, we let $k=\epsilon\mu_Y$ with positive $\mu_Y$. Then, we write:
	\begin{equation*}
	\Pr\bigg(\frac{|Y-\mu_Y|}{\mu_Y}\geq\epsilon\bigg) \leq \frac{\sigma^2_{Y}}{\epsilon^2\mu_Y^2}.
	\end{equation*}
	If we substitute $Y$ by $Y_n$ such that $n=\frac{\sigma^2_Y}{(1-s)\epsilon^{2}\mu_Y^{2}}$ (Lemma~\ref{lemma:meanvar}), then:
	\begin{equation*}
	\Pr\bigg(\frac{|Y_n-\mu_Y|}{\mu_Y}\geq\epsilon\bigg) \leq \frac{\sigma^2_{Y}/n}{\epsilon^2\mu_Y^2} = 1 - s.
	\end{equation*}
	Since the experiment's success probability is at least $s$, we boost it up to $1-\delta$ via Lemma~\ref{lemma:boost_confidence}. First, let $\overline{\mu}_Y$ be the median of $r$ samples of $Y_n$. Then, note that estimate $\overline{\mu}_Y$ ``fails''---lays outside the interval $\mu_Y(1\pm\epsilon)$---if and only if $r/2$ or more samples lay outside $\mu_Y(1\pm\epsilon)$. Thus, choosing $s \in (0.5,1)$, the probability that $\overline{\mu}_Y$ fails is at most:
	\begin{align*}
		\sum_{i=0}^{\lfloor r/2 \rfloor} \binom{r}{i} (s)^{i} (1-s)^{r-i} &\leq\sum_{i=0}^{\lfloor r/2 \rfloor} \binom{r}{i} (s)^{r/2} (1-s)^{r/2}\\
		&\leq (s-s^2)^{r/2}\sum_{i=0}^{\lfloor r/2 \rfloor} \binom{r}{i}\\
		&\leq (s-s^2)^{r/2}\cdot 2^{-r}\\
		&\leq (4s-4s^2)^{r/2}
	\end{align*}
	We use the previous bound to choose $r$ such that $(4s-4s^2)^{r/2}\leq \delta$. In particular, for $s=3/4$, we find:
	\begin{equation*}
		r = \frac{2}{\log(4/3)}\log(1/\delta)
	\end{equation*}
	To recap: construct a single experiment $Y_n$ using $n=O(\sigma^2_Y/\mu^2_Y\epsilon^{-2})$ samples, repeat the experiment $r=O(\log1/\delta)$ times, and return median $\overline{\mu}_Y$. Using $O(\sigma^2_Y/\mu^2_Y\epsilon^{-2}\log1/\delta)$ samples, we showed this procedure returns $\overline{\mu}_Y$ in the range $(1\pm\epsilon)\cdot\mu_Y$ with at least probability $1-\delta$.
\end{proof}
\section*{Acknowledgments}
The proof of Theorem~\ref{theom:markov} is adapted from Prof. Sinclair's online lecture notes~\cite{Sinclair2018}.

\end{document}